\theoremstyle{plain}
\newtheorem{theorem}{Theorem}
\newtheorem{lemma}[theorem]{Lemma}
\newcommand{\emptystr}{\varepsilon}
\newcommand{\SA}{\mathsf{SA}}
\newcommand{\LF}{\mathsf{LF}}
\newcommand{\Lstr}{\mathsf{L}}
\newcommand{\Fstr}{\mathsf{F}}
\newcommand{\palSA}{\mathsf{SA}_{\mathsf{pal}}}
\newcommand{\palLF}{\mathsf{LF}_{\mathsf{pal}}}
\newcommand{\palLstr}{\mathsf{L}_{\mathsf{pal}}}
\newcommand{\palFstr}{\mathsf{F}_{\mathsf{pal}}}
\newcommand{\rank}{\mathsf{rank}}
\newcommand{\select}{\mathsf{select}}
\newcommand{\rangecount}{\mathsf{rangeCount}}
\newcommand{\occ}{\mathsf{occ}}
\newcommand{\ssp}{\mathsf{ssp}}
\newcommand{\sspg}{\mathsf{sspg}}
\newcommand{\nsspg}{\mathsf{G}}
\newcommand{\spp}{\mathsf{spp}}
\newcommand{\sppg}{\pi}
\newcommand{\entropy}{\mathcal{H}}
\newcommand{\rMq}{\mathsf{RMQ}}
\newcommand{\rev}[1]{#1^R}
\newcommand{\lpal}{\mathsf{lpal}}
\newcommand{\idtt}[1]{\ensuremath{\mathtt{#1}}}
\title{PalFM-index: FM-index for Palindrome Pattern Matching} %TODO Please add
\author{
  Shinya Nagashita\\
  {Kyushu Institute of Technology, Japan}\\
  {\texttt{nagashita.shinya206@mail.kyutech.jp}}\\
  \\
  Tomohiro~I\\
  {Kyushu Institute of Technology, Japan}\\
  {\texttt{tomohiro@ai.kyutech.ac.jp}}\\
}
\date{}
\begin{document}

\maketitle

\begin{abstract}
The palindrome pattern matching (pal-matching) is a kind of generalized pattern matching,
in which two strings $x$ and $y$ of same length are considered to match (pal-match) if they have the same palindromic structures, i.e.,
for any possible $1 \le i < j \le |x| = |y|$, $x[i..j]$ is a palindrome if and only if $y[i..j]$ is a palindrome.
The pal-matching problem is the problem of searching for, in a text, the occurrences of the substrings that pal-match with a pattern.
Given a text $T$ of length $n$ over an alphabet of size $\sigma$, 
an index for pal-matching is to support, given a pattern $P$ of length $m$,
the counting queries that compute the number $\occ$ of occurrences of $P$ and
the locating queries that compute the occurrences of $P$.
The authors in~[I et al., Theor. Comput. Sci., 2013] proposed an $O(n \lg n)$-bit data structure
to support the counting queries in $O(m \lg \sigma)$ time and the locating queries in $O(m \lg \sigma + \occ)$ time.
In this paper, we propose an FM-index type index for the pal-matching problem, which we call the PalFM-index,
that occupies $2n \lg \min(\sigma, \lg n) + 2n + o(n)$ bits of space and
supports the counting queries in $O(m)$ time.
The PalFM-indexes can support the locating queries in $O(m + \Delta \occ)$ time
by adding $\frac{n}{\Delta} \lg n + n + o(n)$ bits of space,
where $\Delta$ is a parameter chosen from $\{1, 2, \dots, n\}$ in the preprocessing phase.
\end{abstract}

\section{Introduction}\label{sec:intro}
A palindrome is a string that can be read same backward as forward.
Palindromic structures in a string are one of the most fundamental structures in the string and have been extensively studied. 
For example, it is known that any string $w$ contains at most $|w| + 1$ distinct palindromic substrings~\cite{2001DroubayJP_EpistWordsAndSomeConst},
and the strings reaching the maximum values have some intriguing properties~\cite{2009GlenJWZ_PalinRichn,2009RestivoR_BurrowWheelTransAndPalin}.
Another concept regarding palindromic structures is the palindrome complexity~\cite{2003AlloucheBCD_PalinCompl,2004BrlekHNR_PalinComplOfInfinWords,2010AnisiuAK_TotalPalinComplOfFinit},
which is the number of distinct palindromic substrings of a given length in a string.

Instead of thinking about distinct palindromic substrings,
one might be interested in occurrences of palindromic substrings.
The palindromic structures in such a sense are captured by the maximal palindromes from all possible ``centers'' in a string.
Manacher's algorithm~\cite{1975Manacher_NewLinearTimeOnLine}, originally proposed for computing a prefix-palindrome,
can be extended to compute all the maximal palindromes in $O(|w|)$ time for a string $w$.
The authors in~\cite{2010IIBT_CountAndVerifMaximPalin_SPIRE} considered 
the problem of inferring strings from a given set of maximal palindromes
and showed that the problem can be solved in $O(|w|)$ time.

In~\cite{I2013Ppm}, a new concept called \emph{palindrome pattern matching} was introduced as a generalized pattern matching.
Two strings $x$ and $y$ of the same length are said to \emph{palindrome pattern match} (\emph{pal-match} in short) 
iff they have the same palindromic structures, i.e., the following condition holds:
for any possible $1 \le i < j \le |x| = |y|$, $x[i..j]$ is a palindrome iff $y[i..j]$ is a palindrome.
We remark that $x$ and $y$ themselves are not necessarily palindromes.
The palindrome pattern matching has potential applications to genomic analysis,
in which some palindromic structures play an important role to estimate RNA secondary structures~\cite{1971TinocoUL_EstimOfSeconStrucIn}.

The pal-matching problem is to search for, in a text, the occurrences of the substrings that pal-match with a pattern.
Given a text $T$ of length $n$ and a pattern $P$ of length $m$, 
a Morris-Pratt type algorithm for solving the pal-matching problem in $O(n)$ time was proposed in~\cite{I2013Ppm}.
The method in~\cite{I2013Ppm} is based on the $\lpal$-encoding of a string $w$, denoted as $\lpal_{w}$,
that is the integer array of length $|w|$ such that $\lpal_{w}[i]$ is the length of the longest suffix palindrome of $w[1..i]$.
The $\lpal$-encoding is helpful because two strings $x$ and $y$ pal-match iff $\lpal_{x} = \lpal_{y}$.
When $T$ is large and static, and patterns come online later, one might think of preprocessing $T$ to construct an index for pal-matching.
An index for pal-matching is to support
the counting queries that compute the number $\occ$ of occurrences of $P$ and
the locating queries that compute the occurrences of $P$.
For this purpose, I et al.~\cite{I2013Ppm} proposed the \emph{palindrome suffix tree} of $T$, 
which is a compacted tree of the $\lpal$-encoded suffixes of $T$.
The palindrome suffix tree takes $O(n \lg n)$ bits of space and
supports the counting queries in $O(m \lg \sigma)$ time and the locating queries in $O(m \lg \sigma + \occ)$ time, 
where $\sigma$ is the size of the alphabet from which characters in $T$ are taken
and $\occ$ is the number of occurrences.

In this paper, we present a new index, named the \emph{PalFM-index}, by applying the technique of the FM-index~\cite{Ferragina2000ODS} to the pal-matching problem.
In so doing we introduce a new encoding, named the $\ssp$-encoding, that is based on the non-trivial shortest suffix-palindrome of each prefix.
In contrast to the $\lpal$-encoding, the $\ssp$-encoding has a good property to design the PalFM-index.
The PalFM-index occupies $2n \lg \min(\sigma, \lg n) + 2n + o(n)$ bits of space and supports the counting queries in $O(m)$ time.
The locating queries can be supported in $O(m + \Delta \occ)$ time
by adding $\frac{n}{\Delta} \lg n + n + o(n)$ bits of space,
where $\Delta$ is a parameter chosen from $\{1, 2, \dots, n\}$ in the preprocessing phase.

\subsection{Related work}
One of the well-studied algorithmic problems related to palindromes is factorizing a string into non-empty palindromes,
or in other words, recognizing a string that is obtained by concatenating a certain number of non-empty palindromes~\cite{1975Manacher_NewLinearTimeOnLine,KMP77,Galil1978LOR,2014FiciGKK_SubquadAlgorForMinimPalin_JDA,2014ISIBT_ComputPalinFactorAndPalin,2015KosolobovRS_PalKIsLinearRecog_SOFSEM,2017BorozdinKRS_PalinLengtInLinearTime_CPM,2018RubinchikS_EertrEfficDataStrucFor_EJC}.
The combinatorial properties discovered during tackling this factorization problem are useful to work on palindromes-related problems.

Developing techniques of designing space-efficient indexes for generalized pattern matching is of great interest.
Our PalFM-index was inspired by that of Kim and Cho~\cite{2021KimC_SimplFmIndexForParam},
which is a simplified version of the FM-index for parameterized pattern matching~\cite{2017GangulyST_PbwtAchievSuccinDataStruc_SODA}.
Indexes based on the FM-index for other generalized pattern matching problems were considered in~\cite{Ganguly2017StructuralPatternMatching,2017GagieMV_EncodForOrderPreserMatch_ESA,2021KimC_CompacIndexForCartesTree_CPM}.

\section{Preliminaries}\label{sec:prelim}

\subsection{Notations}
An integer interval $\{ i, i+1, \dots, j\}$ is denoted by $[i..j]$, 
where $[i..j]$ represents the empty interval if $i > j$.

Let $\Sigma$ be a finite \emph{alphabet}, a set of characters.
An element of $\Sigma^*$ is called a \emph{string}.
The length of a string $w$ is denoted by $|w|$. 
The empty string $\emptystr$ is a string of length 0,
that is, $|\emptystr| = 0$.
The concatenated string of two strings $x$ and $y$ are denoted as $x \cdot y$ or simply $xy$.
The $i$-th character of a string $w$ is denoted by $w[i]$ for $1 \leq i \leq |w|$,
and the \emph{substring} of a string $w$ that begins at position $i$ and
ends at position $j$ is denoted by $w[i..j]$ for $1 \leq i \leq j \leq |w|$,
i.e. $w[i..j] = w[i]w[i+1] \dots w[j]$.
For convenience, let $w[i..j] = \emptystr$ if $i > j$.
A substring of the form $w[1..j]$ (resp. $w[i..|w|]$) is called a \emph{prefix} (resp.\ \emph{suffix}) of $w$ and denoted as $w[..j]$ (resp. $w[i..]$) in shorthand.
Note that $\emptystr$ is a substring/prefix/suffix of any string $w$.
A substring of $w$ is called \emph{proper} if it is not $w$ itself.
When needed we use parentheses to indicate positions in a concatenated string, for example,
$(xy)[i]$ refers to the $i$-th character of the string $xy$. 
Hence, $(xy)[i]$ should be distinguished from $xy[i]$, which can be interpreted as the concatenated string of $x$ and $y[i]$.

Let $\prec$ denote the total order over an alphabet we consider.
In particular, we will consider strings over a set consisting of integers and $\infty$, in which natural total order based on their values is employed.
We extend $\prec$ to denote the lexicographic order of strings over the alphabet.
For any strings $x$ and $y$ that do not match,
we say that $x$ is lexicographically smaller than $y$ and denote it by $x \prec y$ iff 
$x[i+1] \prec y[i+1]$ for largest integer $i$ with $x[..i] = y[..i]$, 
where we assume that $x[i+1]$ or $y[i+1]$ refers to the lexicographically smallest character $\$$ if it points to out of bounds.

For any string $w$, let $\rev{w}$ denote the reversed string of $w$,
that is, $\rev{w} = w[|w|] \cdots w[2]w[1]$.
A string $w$ is called a \emph{palindrome} if $w = \rev{w}$.
The \emph{radius} of a palindrome $w$ is $\frac{|w|}{2}$.
The \emph{center} of a palindromic substring $w[i..j]$ of a string $w$ is 
$\frac{i+j}{2}$.
A palindromic substring $w[i..j]$ is called the \emph{maximal palindrome} at the center $\frac{i+j}{2}$
if no other palindromes at the center $\frac{i+j}{2}$ have a larger radius than $w[i..j]$,
i.e., if $w[i-1] \neq w[j+1]$, $i = 1$, or $j = |w|$.

Two strings $x$ and $y$ of same length are said to \emph{palindrome pattern match} (\emph{pal-match} in short) 
iff they have the same palindromic structures, i.e., the following condition holds:
for any possible $1 \le i < j \le |x| = |y|$, $x[i..j]$ is a palindrome iff $y[i..j]$ is a palindrome.
For example, $\idtt{abcbaaca}$ and $\idtt{bcacbbdb}$ pal-match since their palindromic structures coincide (see Figure~\ref{fig:pal-match}).
Note that pal-matching induces a substring consistent equivalent relation~\cite{2016MatsuokaAIBT_GenerPatterMatchAndPeriod_TCS}, i.e., 
if $x$ and $y$ pal-match then $x[i..j]$ and $y[i..j]$ pal-match for any possible $1 \le i < j \le |x| = |y|$.

\begin{figure}[t]
 \center{
   \includegraphics[scale=0.45]{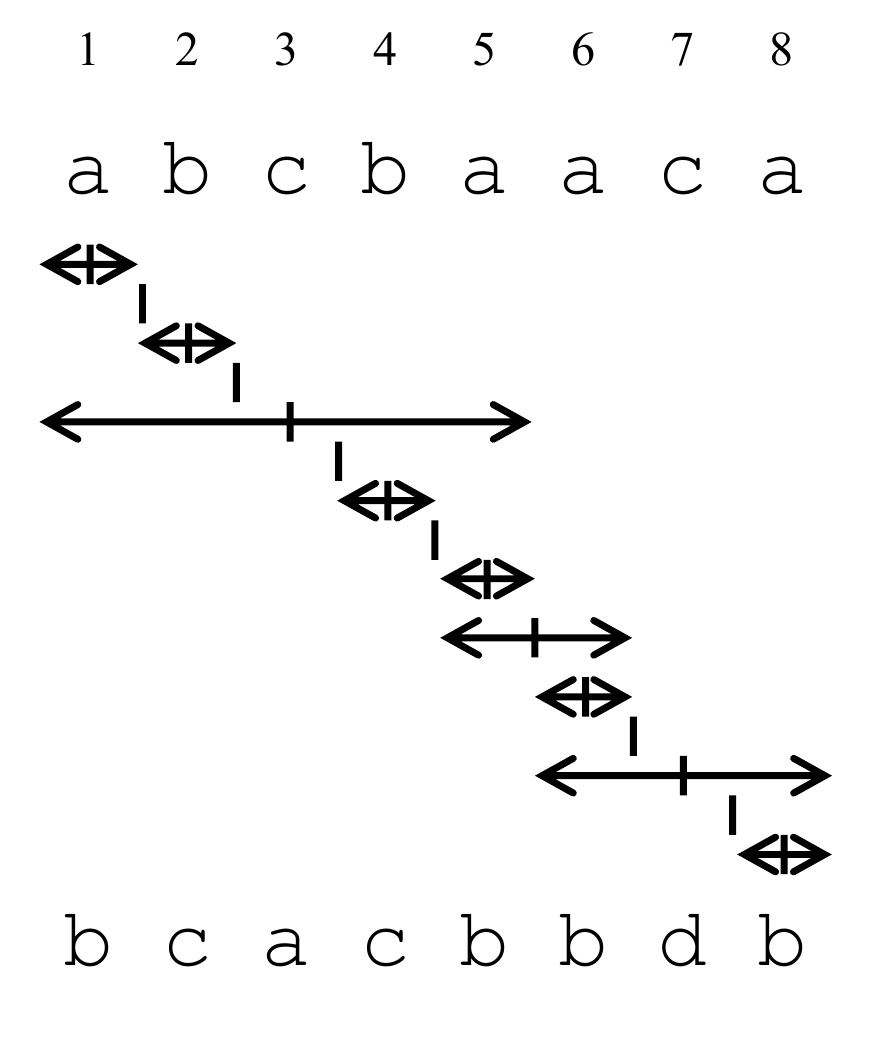}
 }
 \caption{Illustration of the palindromic structures for pal-matching strings $\idtt{abcbaaca}$ and $\idtt{bcacbbdb}$.
   Check that the radii of their maximal palindromes for all possible centers, which are illustrated by two-headed arrows, coincide.
 }
 \label{fig:pal-match}
\end{figure}

The pal-matching problem is to search for, in a text string $T$, the occurrences of the substrings that pal-match with a pattern $P$.
In the pal-matching problem, an occurrence of $P$ refers to a position $i$ such that $T[i..i+|P|-1]$ and $P$ pal-match.
Throughout this paper we consider indexing a text $T$ of length $n$ over an alphabet $\Sigma$ of size $\sigma$.

\subsection{Toolbox}
As a component of our PalFM-index, we use a data structure for a string $w$ 
over an integer alphabet $U$ supporting the following queries.
\begin{itemize}
  \item $\rank_w(i, c)$: return the number of occurrences of character $c \in U$ in $w[..i]$.
  \item $\select_w(i, c)$: return the $i$-th smallest position of the occurrences of character $c \in U$ in $w$.
  \item $\rangecount_w(i, j, c, d)$: return the number of the occurrences of any character in $[c..d] \subseteq U$ in $w[i..j]$.
\end{itemize}
The Wavelet tree~\cite{2003GrossiGV_HighOrderEntropComprText} supports these queries in $O(\lg |\Sigma|)$ time
using $|w| \entropy_0(w) + o(|w| \lg |U|)$ bits of space,
where $\entropy_0(w) = O(\lg |U|)$ is the 0-th order empirical entropy of $w$.
The subsequent studies~\cite{2007FerraginaMMN_ComprRepresOfSequenAnd,2008GolynskiRR_RedunOfSuccinDataStruc_SWAT}
improved the complexities, resulting in the following theorem.
\begin{theorem}[\cite{2008GolynskiRR_RedunOfSuccinDataStruc_SWAT}]\label{theo:wt}
  For a string $w$ over an integer alphabet $U$, there is a data structure in $|w| \entropy_0(w) + o(|w|)$ bits of space that supports 
  $\rank$, $\select$ and $\rangecount$ in $O(1 + \frac{\lg |U|}{\lg\lg |w|})$ time.
\end{theorem}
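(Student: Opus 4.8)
The plan is to realize the structure as a multi-ary (generalized) wavelet tree over the alphabet $U$, tuning the branching factor so that the tree is shallow while each node still admits constant-time navigation. First I would fix a branching factor $\rho = \lceil \lg^{\delta} |w| \rceil$ for a constant $0 < \delta < 1$ and build the wavelet tree in which the root partitions $U$ into $\rho$ contiguous ranges, each child recursing on the subsequence of $w$ whose characters fall into its range. Each internal node stores only the sequence of child-indices (a string over $[0..\rho-1]$) of the characters routed through it, so since $\lg \rho = \Theta(\lg\lg |w|)$ the depth is $O(\log_{\rho} |U|) = O\!\big(\tfrac{\lg |U|}{\lg \lg |w|}\big)$.

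Second, I would equip each node's child-index sequence with a representation supporting $\rank$, $\select$ and $\rangecount$ over the small alphabet $[0..\rho-1]$ in $O(1)$ time. Because $\rho$ is polylogarithmic, a block of $\Theta(\log_{\rho} |w|)$ symbols fits in $O(\lg |w|)$ bits, so a four-Russians-style universal lookup table of size $o(|w|)$ (shared across all nodes) answers intra-block queries in constant time, while a two-level sampling of block and superblock counts handles the inter-block part. Given these node primitives, $\rank_w$ and $\select_w$ follow by the usual top-down and bottom-up traversals, performing one node operation per level; and $\rangecount_w(i,j,c,d)$ follows by splitting the alphabet interval $[c..d]$ along the two root-to-leaf paths to $c$ and $d$ and summing, at each of the $O(\log_\rho |U|)$ levels, a single range count over the relevant band of children. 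In each case the cost is $O(1)$ per level, giving the claimed $O\!\big(1 + \tfrac{\lg |U|}{\lg\lg |w|}\big)$ bound, where the additive $1$ absorbs the regime $\lg|U| = O(\lg\lg|w|)$.

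The hard part will be the space bound: getting the redundancy down to $o(|w|)$ rather than the easier $o(|w| \lg |U|)$. I would store each node's child-index sequence in zeroth-order entropy-compressed form so that, telescoping the per-level entropies down the tree, the leading term sums to exactly $|w| \entropy_0(w)$; the delicate point is that the per-block and per-superblock directories together with the lookup tables must contribute only $o(|w|)$ in total, which forces the block length and the sampling rates to be chosen against $\rho$ and $|w|$ with care (this is precisely the low-redundancy analysis of the cited work). I would finish by checking the boundary regimes, namely very small $|U|$ and $|U|$ close to $|w|$, to confirm that both the time and the space bounds degrade gracefully.
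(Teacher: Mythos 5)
The paper offers no proof of this statement: it is imported verbatim as a known result of Golynski, Raman and Rao (following Grossi--Gupta--Vitter and Ferragina--Manzini--M\"akinen--Navarro), so there is no in-paper argument to compare yours against. That said, your sketch is the standard construction underlying those citations: a generalized wavelet tree with polylogarithmic branching factor $\rho$, depth $O(\lg|U|/\lg\lg|w|)$, constant-time node operations via four-Russians tables over blocks of $\Theta(\log_\rho|w|)$ symbols, and the usual top-down/bottom-up traversals for $\rank$, $\select$ and the two-path decomposition for $\rangecount$. The time bound and the leading space term $|w|\entropy_0(w)$ are argued correctly. The one substantive gap is exactly the point you flag and then defer: driving the total redundancy down to $o(|w|)$ rather than $o(|w|\lg|U|)$. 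This is not a routine tuning of block lengths --- the paper itself already credits plain wavelet trees with $|w|\entropy_0(w)+o(|w|\lg|U|)$ bits, so the $o(|w|)$ redundancy is the entire content of the cited improvement. With a na\"ive per-node layout the block/superblock directories and pointers can cost $\Omega(|U|\lg|w|/\rho)$ bits over the $O(|U|/\rho)$ internal nodes, which need not be $o(|w|)$ when $|U|$ approaches $|w|$; the cited works avoid this by concatenating the node sequences level by level and sharing directories, which your proposal gestures at but does not carry out. For the use made of the theorem in this paper the issue is moot, since the alphabet of $\palLstr$ and $\palFstr$ has size $O(\min(\sigma,\lg n))$, but as a proof of the stated theorem the redundancy analysis remains the missing piece.
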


We also use a data structure for the \emph{Range Maximum Queries (RMQs)} over an integer array $V$.
Given an interval $[i..j]$ over $V$, a query $\rMq_V(i, j)$ returns a position in $[i..j]$ that has the maximum value in $V[i..j]$,
that is, $\rMq_V(i, j) = \arg\max_{k \in [i..j]}V[k]$.
We use the following result.
\begin{theorem}[\cite{2011FischerH_SpaceEfficPreprSchemFor}]\label{theo:rmq}
  For an integer array $V$ of length $n$, there is a data structure with $2n + o(n)$ bits of space that supports the RMQs in $O(1)$ time.
\end{theorem}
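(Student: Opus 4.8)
The plan is to reduce range maximum queries to a navigation problem on a tree and then encode that tree succinctly, so that at query time the array $V$ itself is never consulted. First I would recall the \emph{Cartesian tree} of $V$: the binary tree whose root is a position holding the maximum value of $V$ (ties broken, say, leftmost), and whose left and right subtrees are the Cartesian trees of the portions of $V$ strictly to the left and to the right of that position. The crucial structural fact is that $\rMq_V(i,j)$ is exactly the position stored at the \emph{lowest common ancestor} of nodes $i$ and $j$ in this tree. Equivalently, and more cleanly for the encoding, I would work with Fischer and Heun's \emph{2d-heap}, an ordinal tree on $n+1$ nodes for which $\rMq_V(i,j)$ reduces to one LCA computation followed by taking a single child. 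The point that makes an array-free, $2n+o(n)$-bit structure conceivable is that these answers depend only on the \emph{shape} of the tree, not on the actual values of $V$.

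Next I would pin down the target space. The number of distinct Cartesian-tree shapes on $n$ nodes is the Catalan number $C_n$, and $\lg C_n = 2n - \Theta(\lg n)$, so $2n$ bits is essentially the information-theoretic minimum for recording the shape, and this is the bound we aim to match up to lower-order terms. To realize it, I would store the tree as a \emph{balanced-parentheses} (or DFUDS) bit-string, obtained by an Euler tour that emits an open parenthesis on first entering a node and a close parenthesis on leaving it. For the $(n+1)$-node heap this string has length $2n + O(1)$ and determines the tree completely.

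It then remains to answer a query in $O(1)$ time using only $o(n)$ additional bits. Here I would equip the parenthesis string with $\rank$/$\select$ support (a bit-vector index, or Theorem~\ref{theo:wt} over the binary alphabet) together with a constant-time LCA operation. Mapping an array index $i$ to its node is a $\select$ on open parentheses; the LCA of the two nodes is found via the standard excess machinery, i.e.\ a range-minimum over the $\pm 1$ excess array of the parenthesis sequence; and translating the answer node back to an array position is another $\rank$/$\select$ step. Since the theorem concerns maxima while the construction above is stated for minima, negating $V$ (conceptually; no storage of $V$ is needed) lets the same structure serve RMQs for maxima.

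I expect the main obstacle to be realizing the $O(1)$ LCA—equivalently the range-minimum over the excess array—within only $o(n)$ auxiliary bits, since the raw $2n$-bit parenthesis string already carries all the shape information and must not be blown up. This is precisely the succinct-tree machinery: because the excess array is a $\pm 1$ sequence, a micro/macro decomposition suffices, answering queries inside $\Theta(\lg n)$-size micro-blocks by precomputed tables indexed by the block's $\pm 1$ pattern (of which there are only $2^{\Theta(\lg n)} = \mathrm{poly}(n)$, giving tables of total size $o(n)$), and summarizing across blocks by a balanced \emph{range min-max tree} whose nodes store excess values \emph{relative} to their block—hence only $O(\lg\lg n)$ bits each, for a summary of $O((n/\lg n)\lg\lg n) = o(n)$ bits. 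The delicate part is verifying that every tier of this hierarchy contributes only $o(n)$ bits and that all index-to-node translations are carried out purely on the parenthesis representation, so that $V$ is never accessed; once this toolkit is in place, the reduction itself is routine.
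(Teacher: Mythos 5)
The paper does not prove this statement at all: Theorem~\ref{theo:rmq} is imported verbatim from Fischer and Heun's work and used as a black box, so there is no in-paper proof to compare against. Your sketch is a faithful reconstruction of the cited construction -- reduce $\rMq$ to LCA on the Cartesian tree (or Fischer--Heun's 2d-heap), note that the answer depends only on the tree shape so $V$ need never be stored, encode the shape in $2n+O(1)$ bits of balanced parentheses/DFUDS (matching the $\lg C_n = 2n-\Theta(\lg n)$ lower bound), and support $O(1)$ LCA via excess-minimum queries using $o(n)$-bit micro/macro tables over the $\pm 1$ excess sequence, with the min-to-max switch handled by conceptually negating $V$. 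All the essential ingredients are present and correctly placed; the only caveat is that the $o(n)$-bit constant-time LCA machinery you flag as the delicate step is indeed where the bulk of the technical work lives in the original papers, and your outline of it (polynomially many block types, relative excess values of $O(\lg\lg n)$ bits per macro node) is the right account of why it stays within $o(n)$ bits.
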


\subsection{FM-index}
The suffix array $\SA$ of $T$ is the integer array of length $n+1$ such that 
$\SA[i]$ is the starting position of the lexicographically $i$-th suffix of $T$.\footnote{Against convention, we include the empty string that starts with the position $n+1$ to $\SA$. In particular, $\SA[1] = n+1$ holds as the empty string is always the smallest suffix.}
We define the string $\Lstr$ (a.k.a.\ the \emph{Burrows-Wheeler Transform (BWT)}~\cite{Burrows1994BWT} of $T$) of length $n+1$ as follows:
\begin{equation*}
  \Lstr[i] =
  \begin{cases}
    \$           & (\SA[i] = 1),\\
    T[\SA[i]-1]  & (\SA[i] > 1).
  \end{cases}
\end{equation*}
We define the string $\Fstr$ of length $n+1$ as $\Fstr = T[\SA[1]] T[\SA[2]] \cdots T[\SA[n+1]]$.
The so-called \emph{LF-mapping} $\LF$ is the function defined to map a position $i$ to $j$ 
such that $\SA[j] = \SA[i] - 1$ (with the corner case $\LF(i) = 1$ for $\SA[i] = 1$).
A crucial point is that LF-mapping can be efficiently implemented by rank queries on $\Lstr$ and select queries on $\Fstr$
with $\LF(i) = \select_{\Fstr}(\rank_{\Lstr}(i, \Lstr[i]), \Lstr[i])$.
\footnote{In the plain LF-mapping, select queries on $\Fstr$ can be implemented by a simple table that counts, for each character $c$, the number of occurrences of characters smaller than $c$ in $T$, but it is not the case in our generalized LF-mapping for pal-matching.}
The occurrences of pattern $P$ in $T$ can be answered by finding the maximal interval $[P_b..P_e]$ in the $\SA$ array such that $T[\SA[i]..]$ 
is prefixed by $P$ iff $i \in [P_b..P_e]$, and computing the $\SA$-values in the interval.
For a string $w$ and character $c$, the so-called \emph{backward search} computes 
the maximal interval in the $\SA$ prefixed by $cw$ from that of $w$
using a similar mechanism of the LF-mapping (see~\cite{Ferragina2000ODS} for more details).

\section{Encodings for pal-matching}
The pal-matching algorithms in~\cite{I2013Ppm} are based on the $\lpal$-encoding of a string $w$, denoted as $\lpal_{w}$.
$\lpal_{w}$ is the integer array of length $|w|$ 
such that, for any position $1 \le i \le |w|$, $\lpal_{w}[i]$ is the length of the longest suffix-palindrome of $w[1..i]$.
See Table~\ref{tbl:lpal_ssp} for example.

\begin{lemma}[Lemma 2 in~\cite{I2013Ppm}]\label{lem:lpal_match}
  For any strings $x$ and $y$, $x$ and $y$ pal-match iff $\lpal_{x} = \lpal_{y}$.
\end{lemma}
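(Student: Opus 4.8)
The plan is to prove the two implications separately, handling the easy direction first and then reducing the converse to a statement about reconstructing all palindromic substrings from the $\lpal$-array. Throughout, I note that both sides of the equivalence force $|x| = |y|$ (otherwise $\lpal_x$ and $\lpal_y$ have different lengths and the pal-match condition is undefined), so I assume $|x| = |y|$.

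For the forward direction (pal-match $\Rightarrow \lpal_x = \lpal_y$), I would fix a position $i$ and compare the palindromic suffixes of the two prefixes. For every $k$ with $1 \le k < i$, the pair $(k,i)$ is admissible in the pal-match condition, so $x[k..i]$ is a palindrome iff $y[k..i]$ is; for $k = i$ both $x[i..i]$ and $y[i..i]$ are single characters and hence palindromes. Thus $\{k : x[k..i] \text{ is a palindrome}\} = \{k : y[k..i] \text{ is a palindrome}\}$, and since the longest suffix-palindrome is the one attaining the smallest such $k$, we get $\lpal_x[i] = \lpal_y[i]$. As $i$ was arbitrary, $\lpal_x = \lpal_y$.

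The converse is the substantive direction, and the key reformulation is that $x[i..j]$ is a palindrome iff it is a suffix-palindrome of $x[1..j]$; hence it suffices to show that the palindromic structure of $w$ --- the predicate ``$w[i..j]$ is a palindrome'' for all $i \le j$ --- is a function of $\lpal_w$ alone. I would prove this by strong induction on $j$, with inductive hypothesis that $\lpal_w[1..j-1]$ determines the palindrome-status of every substring ending at or before $j-1$. For the step I only need to determine the set $L_j$ of lengths of palindromic suffixes of $w[1..j]$, since $w[i..j]$ is a palindrome iff $j-i+1 \in L_j$. Writing $\ell = \lpal_w[j]$ and $a = j - \ell + 1$, I claim
\[
 L_j = \{\ell\} \cup \{\, d - a + 1 : a \le d \le j-1 \text{ and } w[a..d] \text{ is a palindrome} \,\}.
\]

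The heart of the argument --- and the step I expect to be the main obstacle --- is justifying this formula, which rests on a border property of palindromes: if $p = w[a..j]$ is the longest palindromic suffix of $w[1..j]$, then every shorter palindromic suffix $s$ of $w[1..j]$ is a suffix of $p$, and being a palindrome it satisfies $s = \rev{s}$, so $\rev{s}$ is a prefix of $\rev{p} = p$ and hence $s$ occurs as a prefix $w[a..d]$ of $p$ of the same length; the converse map (prefix-palindrome $\mapsto$ suffix-palindrome) is symmetric, giving a length-preserving bijection between the proper palindromic suffixes of $w[1..j]$ and the proper palindromic prefixes $w[a..d]$ of $p$. Crucially every such $d$ satisfies $d < j$, so whether $w[a..d]$ is a palindrome is part of the palindromic structure of $w[1..j-1]$ and is therefore already determined by $\lpal_w[1..j-1]$ via the inductive hypothesis; combined with the anchor $a = j - \lpal_w[j] + 1$, this pins down $L_j$ and closes the induction. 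I would finish by checking the degenerate case $\ell = 1$ (where the bijection range is empty and $L_j = \{1\}$) and by confirming that the bijection ranges over all of $L_j$, which holds because $p$, as the longest palindromic suffix, contains every other palindromic suffix of $w[1..j]$ as a suffix. Applying the conclusion to $x$ and $y$ with $\lpal_x = \lpal_y$ then yields identical palindromic structures, i.e.\ a pal-match.
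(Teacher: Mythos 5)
Your proof is correct, but note that the paper itself does not prove this statement: Lemma~\ref{lem:lpal_match} is imported as Lemma~2 of~\cite{I2013Ppm}, so there is no in-paper proof to compare against, and your argument stands as a self-contained substitute. The forward direction is the routine observation that the suffix-palindrome lengths at each position are read off the palindromic structure. For the converse you prove the stronger ``reconstruction'' statement that $\lpal_{w}[1..j]$ determines the whole set $L_j$ of suffix-palindrome lengths at $j$, by induction on $j$: the key step uses the symmetry that a suffix of the longest suffix-palindrome $p = w[a..j]$ is a palindrome iff the prefix of $p$ of the same length is, which transports the question to substrings ending strictly before $j$, where the inductive hypothesis applies; you correctly note that maximality of $\lpal_w[j]$ is what guarantees this bijection exhausts $L_j\setminus\{\lpal_w[j]\}$. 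It is worth contrasting this with how the present paper proves the analogous statement for its own encoding (Lemma~\ref{lem:ssp_match}): there the converse is argued by minimal counterexample --- take the smallest $j$ at which the palindromic structures of $x$ and $y$ diverge and derive a contradiction from the very same prefix/suffix palindrome symmetry. The two routes are essentially dual: the contradiction style is shorter on paper, while your induction yields the reusable positive fact that the entire palindromic structure is a function of $\lpal_w$ alone. Either approach is sound.
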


Although Lemma~\ref{lem:lpal_match} is sufficient to design suffix-tree type indexes,
it seems that the $\lpal$-encoding is not suitable to design FM-index type indexes.
For example, more than one position could change when a character is prepended (see Table~\ref{tbl:lpal_ssp})
and this unstable property make messes up lexicographic order of $\lpal$-encoded suffixes,
which prevents us to implement LF-mapping space efficiently.

In this paper, we introduce a new encoding suitable to design FM-index type indexes for pal-matching.
Our new encoding is based on the shortest suffix-palindrome for each prefix,
where the shortest suffix is chosen excluding the trivial palindromes of length $\le 1$.
We call the encoding the \emph{shortest suffix-palindrome encoding} (the $\ssp$-encoding in short).
For any string $w$, the $\ssp$-encoding $\ssp_{w}$ of $w$ is the integer array of length $|w|$
such that, for any position $1 \le i \le |w|$, 
$\ssp_{w}[i]$ is the length of the non-trivial shortest suffix-palindrome of $w[..i]$
if such exists, and otherwise $\infty$. See Table~\ref{tbl:lpal_ssp} for example.

\begin{table*}[t]
  \caption{A comparison between $\lpal$ and $\ssp$ for $w = \idtt{abbbabb}$ and $w' = \idtt{b}w = \idtt{babbbabb}$.
    The values that change when prepending \idtt{b} to $w$ are underlined.}
  \label{tbl:lpal_ssp}
  \centering{
    \begin{tabular}{rcccccccc}
      $w = $          &          & \idtt{a} & \idtt{b} & \idtt{b} & \idtt{b} & \idtt{a}  & \idtt{b}  & \idtt{b} \\
      $\lpal_{w} = $   &          & 1        & 1        & 2        & 3        & 5         & 3         & 5 \\
      $\ssp_{w} = $    &          & $\infty$ & $\infty$ & 2        & 2        & 5         & 3         & 2 \\
      $w' = $         & \idtt{b} & \idtt{a} & \idtt{b}  & \idtt{b} & \idtt{b} & \idtt{a}  & \idtt{b}  & \idtt{b} \\
      $\lpal_{w'} = $  & 1        & 1        & \underline{3} & 2        & 3        & 5         & \underline{7} & 5 \\
      $\ssp_{w'} = $   & $\infty$ & $\infty$ & \underline{3} & 2        & 2        & 5         & 3         & 2 \\
    \end{tabular}
  }
\end{table*}

\begin{lemma}\label{lem:ssp_match}
  Two strings $x$ and $y$ pal-match iff $\ssp_{x} = \ssp_{y}$.
\end{lemma}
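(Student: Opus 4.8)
The plan is to prove the two implications separately, exploiting that $\ssp_w$ is, by construction, a function of the \emph{palindromic structure} of $w$. Writing $S_i$ for the set of lengths $\ell \in [2..i]$ such that $w[i-\ell+1..i]$ is a palindrome (the non-trivial suffix-palindromes of $w[..i]$), we have $\ssp_w[i] = \min S_i$, with the convention $\min \emptyset = \infty$. The direction ($\Rightarrow$) is then immediate: if $x$ and $y$ pal-match, then $|x| = |y|$ and, by definition, $x[a..b]$ is a palindrome iff $y[a..b]$ is, for all $1 \le a < b \le |x|$; fixing $b = i$ and ranging over $a$ gives $S_i^{x} = S_i^{y}$ for every $i$, so $\ssp_x[i] = \min S_i^{x} = \min S_i^{y} = \ssp_y[i]$ and $\ssp_x = \ssp_y$.

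For ($\Leftarrow$) I would prove the stronger statement that $\ssp_w$ \emph{determines the whole family} $(S_i)_i$, i.e.\ the complete set of palindromic substrings of $w$; then $\ssp_x = \ssp_y$ forces $S_i^x = S_i^y$ for all $i$, so $x$ and $y$ share the same palindromic structure and pal-match. (Alternatively one may recover $\lpal_w$ and invoke Lemma~\ref{lem:lpal_match}, since $\lpal_w[i] = \max S_i$ whenever $S_i \neq \emptyset$.) The reconstruction engine is a left-to-right induction computing $S_i$ from $S_1,\dots,S_{i-1}$. The clean structural fact I would use is that, since the longest suffix-palindrome $P$ of $w[..i]$ (of length $L = \max S_i$) is itself a palindrome, each palindromic prefix of $P$ is also a suffix of $P$, and vice versa; hence the shorter suffix-palindromes of $w[..i]$ are exactly the palindromic prefixes of $P$, which occur as suffix-palindromes at the earlier end positions $i - L + t$. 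This yields
\[
  S_i = \{L\} \cup \{\, t \in [2..L-1] : t \in S_{i-L+t} \,\}, \qquad L = \max S_i,
\]
which already shows that the maxima $(\max S_i)_i = \lpal_w$ determine $(S_i)_i$. Our task thus reduces to recovering $\max S_i$ from the minima $\ssp_w[\cdot] = \min S_\cdot$.

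The hard part is exactly this dual reconstruction. The natural approach is to treat the finite entries of $\ssp_w$ as \emph{seeds}: $\ssp_w[i]$ pins down the innermost (largest-center) non-trivial suffix-palindrome at $i$, and the longer suffix-palindromes arise by extending the corresponding maximal palindromes outward through increasing end positions. The obstacle is that deciding how far such a palindrome extends is a character comparison $w[a] = w[b]$ that $\ssp$ does not expose directly, and, as the example $\idtt{abcbaaca}$ vs.\ $\idtt{bcacbbdb}$ shows, pal-match need not preserve raw character equalities; the comparison must therefore be shown to be \emph{forced} by the data actually available. I would attack this through the periodicity structure of suffix-palindromes: by Fine--Wilf the lengths in $S_i$ split into $O(\lg i)$ arithmetic progressions, within which the relevant extension characters are synchronized. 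Carried through the induction, each comparison needed to extend the structure at position $i$ is rewritten, by mirroring inside an already-reconstructed nested palindrome, into either a palindrome-forced equality (preserved by the induction hypothesis) or a membership test answered by a shortest-palindrome value at a position $< i$. Making this synchronization precise is where the real combinatorial work lies; once it is in place, equal $\ssp$ forces equal $(S_i)_i$, and the two strings pal-match.
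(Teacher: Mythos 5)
Your forward direction is fine, but the backward direction has a genuine gap that you yourself flag: the entire argument is reduced to the claim that the minima $\ssp_w[i] = \min S_i$ determine the maxima $\max S_i = \lpal_w[i]$ (equivalently, the whole family $(S_i)_i$), and you then write that ``making this synchronization precise is where the real combinatorial work lies.'' That is precisely the content of the lemma, so nothing in the hard direction is actually proved. The reconstruction route you sketch is also harder than it needs to be: it asks for a canonical procedure that rebuilds the full palindromic structure of a \emph{single} string from $\ssp_w$, which forces you into the Fine--Wilf/arithmetic-progression machinery to justify the outward extensions, and it is not at all clear the sketch closes without substantial additional argument.

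The paper avoids reconstruction entirely by comparing the two strings directly via a minimal counterexample. Suppose $\ssp_x = \ssp_y$ but $x$ and $y$ do not pal-match; pick the smallest end position $j$ admitting an $i$ with $x[i..j]$ a palindrome and $y[i..j]$ not. Minimality of $j$ forces $y[i+1..j-1]$ to be a palindrome (otherwise $j-1$ would be a smaller bad end position, since $x[i+1..j-1]$ is a palindrome). Set $k = \ssp_x[j] = \ssp_y[j]$; since $x[i..j]$ is a non-trivial palindrome, $1 < k \le j-i+1$, and $k \ne j-i+1$ because $y[i..j]$ is not a palindrome while $y[..j]$ has a suffix-palindrome of length $k$. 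Reflecting the length-$k$ suffix-palindrome inside the palindrome $x[i..j]$ shows $x[i..i+k-1]$ is a palindrome. If $y[i..i+k-1]$ were also a palindrome, then chaining $y[i] = y[i+k-1]$ (from $y[i..i+k-1]$), $y[i+k-1] = y[j-k+1]$ (from $y[i+1..j-1]$), and $y[j-k+1] = y[j]$ (from the length-$k$ suffix-palindrome at $j$) gives $y[i] = y[j]$, making $y[i..j]$ a palindrome --- a contradiction. Hence $x[i..i+k-1]$ and $y[i..i+k-1]$ disagree at the end position $i+k-1 < j$, contradicting minimality. This uses only one value of $\ssp$ at one position and needs none of the periodicity structure; you should replace your reconstruction plan with an argument of this comparison type, or else actually carry out the synchronization step you deferred.
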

\begin{proof}
  Since the $\ssp$-encoding relies only on palindromic structures, the direction from left to right is clear.

  In what follows, we focus on the opposite direction; $x$ and $y$ pal-match if $\ssp_{x} = \ssp_{y}$.
  Assume for contrary that $x$ and $y$ does not pal-match.
  Without loss of generality, we can assume that there are positions $i$ and $j$ such that $x[i..j]$ is a palindrome but $y[i..j]$ is not,
  with smallest $j$ if there are many.
  Note that the smallest assumption on $j$ implies that $y[i+1..j-1]$ is a palindrome:
  If $y[i+1..j-1]$ is not a palindrome (clearly $|y[i+1..j-1]| > 1$ in such a case), 
  $j-1$ must be a smaller position that satisfies the above condition
  because $x[i+1..j-1]$ is a palindrome.
  Let $k = \ssp_{x}[j] = \ssp_{y}[j]$.
  Since $x[i..j]$ is a palindrome, it holds that $1 < k \le |x[i..j]|$.
  Moreover, $k \neq |y[i..j]|$ as $y[i..j]$ is not a palindrome.
  Since the palindrome $x[i..j]$ has a suffix-palindrome of length $k$, the prefix $x[i..i+k-1]$ of length $k$ is a palindrome, too.
  On the other hand, since $y[i..j]$ is not a palindrome that has a suffix-palindrome of length $k$, the prefix $y[i..i+k-1]$ of length $k$ cannot be a palindrome.
  This contradicts the smallest assumption on $j$ because $i+k-1$ is a smaller position such that $x[i..i+k-1]$ and $y[i..i+k-1]$ disagree on their palindromic structures.
\end{proof}

In contrast to the $\lpal$-encoding, the $\ssp$-encoding has a stable property when prepending a character.
\begin{lemma}\label{lem:ssp_change}
  For any string $w$ and character $c$, there is at most one position $i~(1 \le i \le |w|)$ such that $\ssp_{w}[i] \neq \ssp_{cw}[i+1]$.
  Moreover, if such a position $i$ exists, $\ssp_{w}[i] = \infty$ and $\ssp_{cw}[i+1] = i+1$.
\end{lemma}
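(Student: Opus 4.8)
The plan is to analyze how the set of non-trivial suffix-palindromes of the relevant prefix changes when $c$ is prepended, comparing $(cw)[..i+1] = c \cdot w[..i]$ against $w[..i]$. First I would establish the key structural observation: any suffix-palindrome of $(cw)[..i+1]$ of length $\ell \le i$ begins at a position $\ge 2$ in $cw$ and hence coincides with a suffix of $w[..i]$ of the same length, and conversely every suffix-palindrome of $w[..i]$ is a suffix-palindrome of $(cw)[..i+1]$. Consequently, the only suffix-palindrome that prepending $c$ can newly create is the whole prefix $(cw)[..i+1]$ of length $i+1$, and this occurs precisely when $(cw)[..i+1]$ is itself a palindrome.

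From this correspondence I would read off the characterization of the changing positions. If $w[..i]$ already admits a non-trivial suffix-palindrome, say $\ssp_{w}[i] = k$ with $2 \le k \le i$, then $k$ is still the length of the shortest non-trivial suffix-palindrome of $(cw)[..i+1]$, since the only candidate that prepending $c$ could add has length $i+1 > k$; thus $\ssp_{cw}[i+1] = \ssp_{w}[i]$ and no change happens. If instead $\ssp_{w}[i] = \infty$, then $(cw)[..i+1]$ has a non-trivial suffix-palindrome only when the whole prefix is a palindrome, in which case $\ssp_{cw}[i+1] = i+1 \ne \infty$. This already yields the \emph{moreover} part: at any changing position $i$ we necessarily have $\ssp_{w}[i] = \infty$ and $\ssp_{cw}[i+1] = i+1$.

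It remains to prove that at most one changing position exists, and this uniqueness argument is the step I expect to be the main obstacle. A changing position $i$ satisfies $\ssp_{w}[i] = \infty$ and makes $(cw)[..i+1]$ a prefix-palindrome of $cw$. Suppose for contradiction that two changing positions $i_1 < i_2$ exist, so that $(cw)[..i_1+1]$ and $(cw)[..i_2+1]$ are both prefix-palindromes of $cw$, the former a proper prefix of the latter. The crux is the nesting property of prefix-palindromes: a shorter prefix-palindrome $P$ of a longer palindrome $Q$ is also a suffix of $Q$, because $P$ is a prefix of $Q = \rev{Q}$, whence $\rev{P} = P$ is a suffix of $Q$. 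Applying this with $P = (cw)[..i_1+1]$ and $Q = (cw)[..i_2+1]$ shows that $(cw)[..i_2+1]$ has a suffix-palindrome of length $i_1+1 \ge 2$. Since $i_1+1 \le i_2$, this suffix lies entirely within $w$ and is therefore a non-trivial suffix-palindrome of $w[..i_2]$, contradicting $\ssp_{w}[i_2] = \infty$. Hence at most one changing position can occur, which together with the previous paragraph completes the proof.
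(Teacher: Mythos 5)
Your proof is correct and follows essentially the same route as the paper's: both first observe that prepending $c$ can only change a position where $\ssp_w[i]=\infty$ and $(cw)[..i+1]$ becomes a palindrome of length $i+1$, and both derive uniqueness from the fact that a prefix-palindrome of a palindrome is also a suffix-palindrome of it. The only cosmetic difference is that you land the final contradiction on $\ssp_{w}[i_2]=\infty$ while the paper lands it on $\ssp_{cw}[i_2+1]=i_2+1$; these are interchangeable.
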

\begin{proof}
  By definition it is obvious that $\ssp_{w}[i] = \ssp_{cw}[i+1]$ if $\ssp_{w}[i] \neq \infty$.
  In what follows, we assume for contrary that there exist two positions $i$ and $i'$ with $1 \le i < i' \le |w|$ such that 
  $\ssp_{w}[i] = \infty > \ssp_{cw}[i+1]$ and $\ssp_{w}[i'] = \infty > \ssp_{cw}[i'+1]$.
  Note that $\ssp_{cw}[i+1] = i+1$ and $\ssp_{cw}[i'+1] = i'+1$ by definition,
  and $(cw)[..i+1]$ and $(cw)[..i'+1]$ are palindromes.
  Since $(cw)[..i+1]$ is a prefix-palindrome of $(cw)[..i'+1]$, it is also a suffix-palindrome of $(cw)[..i'+1]$.
  It contradicts that $(cw)[..i'+1]$ is the non-trivial shortest suffix-palindrome of $(cw)[..i'+1]$.
\end{proof}

We consider yet another encoding based on the shortest suffix of $w[..i-1]$
that is extended outwards when appending a character $w[i]$.
The concept is closely related to the $\ssp$-encoding because the extended palindrome is 
the non-trivial shortest suffix-palindrome of $w[..i]$.
An advantage of this new encoding is that we can reduce the number of distinct integers to be used to $O(\min(\sigma, \lg |w|))$,
which will be used (in a symmetric way) to define $\palLstr$ and obtain a space-efficient FM-index specialized for pal-matching.

For any string $w$ we partition the suffix-palindromes (including the empty suffix) by the characters they have immediately to their left
and call each group a \emph{suffix-pal-group} for $w$.
We utilize the following lemma.
\begin{lemma}\label{lem:num_groups}
  For any string $w$, the number of suffix-pal-groups for $w$ is $O(\min(\sigma, \lg |w|))$.
\end{lemma}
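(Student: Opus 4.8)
The plan is to prove the two bounds $O(\sigma)$ and $O(\lg|w|)$ separately, since together they give $O(\min(\sigma,\lg|w|))$. The $O(\sigma)$ bound is immediate: each suffix-pal-group is identified by the character immediately to the left of its members, so the number of groups is at most $\sigma+1$ (the extra one accounting for palindromes that reach the left end of $w$, i.e.\ prefix-palindromes of $w$, which have no character to their left). The real work is the $O(\lg|w|)$ bound, and I expect that to be the main obstacle.

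For the $O(\lg|w|)$ bound the plan is to invoke the classical periodicity structure of the suffix-palindromes of a string. It is well known that the suffix-palindromes of $w$, listed by increasing length, split into $O(\lg|w|)$ maximal arithmetic progressions of lengths (``runs''), where within each run consecutive suffix-palindromes share a common difference $d$ that is the smallest period of the longest palindrome in that run; this is a standard consequence of the fact that if a palindrome $u$ has a suffix-palindrome of length greater than $|u|/2$ then $u$ is periodic, and the number of such length-classes at least doubles as one moves to longer palindromes. The key step is then to argue that all suffix-palindromes lying in a single arithmetic-progression run contribute to $O(1)$ suffix-pal-groups. First I would show that within one run the palindromes of length strictly less than the longest are obtained from the longest by repeatedly stripping off the period $d$ from both ends, so each such shorter palindrome $p'$ is obtained from the next-longer one $p$ in the run by removing a prefix (and suffix) of length $d$; consequently the character immediately to the left of $p'$ inside $w$ equals the character at a fixed offset inside the longest palindrome of the run, and by the periodicity these left-characters are all equal (they coincide with $(cw)$-style shifts of a single periodic block). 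This forces all but possibly the longest palindrome of the run into a single common group.

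Putting the pieces together, each of the $O(\lg|w|)$ runs contributes $O(1)$ groups, giving at most $O(\lg|w|)$ suffix-pal-groups in total, and combined with the trivial $O(\sigma)$ bound we obtain $O(\min(\sigma,\lg|w|))$. The hard part will be the periodicity argument for a single run: one must carefully verify that the equal-difference structure of palindrome lengths really does translate into equality of the left-neighbor characters, and handle the boundary cases where a suffix-palindrome is itself a prefix of $w$ (so it has no left neighbor and falls into the special ``no character'' group). I would isolate the periodicity facts about palindromic suffixes as a cited combinatorial lemma (they are standard in the eertree / palindromic-factorization literature referenced earlier) so that the proof reduces to the bookkeeping of assigning each run's palindromes to $O(1)$ groups.
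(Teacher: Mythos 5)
Your proof takes essentially the same route as the paper's: the trivial $O(\sigma)$ bound (one group per left-neighbour character), plus the classical fact that the suffix-palindrome lengths decompose into $O(\lg |w|)$ arithmetic progressions, with the induced periodicity forcing each progression to contribute $O(1)$ groups. You are in fact slightly more careful than the paper on the edge cases --- the paper asserts every palindrome of a progression lands in a single group, whereas (as you note) the longest member of a run and the full-string prefix-palindrome may need separate treatment --- but this does not change the $O(\lg |w|)$ bound.
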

\begin{proof}
  It is obvious that the number of suffix-pal-groups is at most $\sigma$ because each character is associated to at most one suffix-pal-group.
  Also it is known that the lengths of the suffix-palindromes can be represented by $O(\lg |w|)$ arithmetic progressions
  and each arithmetic progression induces a period in the involved suffix (e.g., see~\cite{2014ISIBT_ComputPalinFactorAndPalin}).
  Then we can see that every suffix-palindrome represented by an arithmetic progression is in the same group.
  Hence there are $O(\lg |w|)$ groups.
\end{proof}

The next lemma shows that pal-matching strings share the same structure of suffix-pal-groups.
\begin{lemma}\label{lem:same_groups}
  Let $x$ and $y$ be strings that pal-match and let $i$ and $j$ be integers with $1 \le i < j \le |x| = |y|$.
  If $x[i+1..]$ and $x[j+1..]$ are palindromes with $x[i] = x[j]$,
  then $y[i+1..]$ and $y[j+1..]$ are palindromes with $y[i] = y[j]$.
\end{lemma}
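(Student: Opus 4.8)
The plan is to dispatch the two palindromicity claims for $y$ immediately from the definition of pal-matching, and then to spend all the effort on the one genuinely new assertion, $y[i] = y[j]$. Since $x[i+1..]$ and $x[j+1..]$ are palindromes and pal-matching preserves exactly which substrings are palindromes, $y[i+1..]$ and $y[j+1..]$ are palindromes as well (substrings of length $\le 1$ are trivially palindromic in both strings, so no real case distinction is needed here). The difficulty is that the equality $x[i]=x[j]$ is a statement about raw characters, whereas pal-matching only transports \emph{palindrome membership}; so the key is to repackage $x[i]=x[j]$ as the palindromicity of a concrete substring of $x$, transport that substring to $y$, and then read $y[i]=y[j]$ back off it.

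Write $n = |x| = |y|$ and $d = j-i \ge 1$. First I would exploit the border structure inside $x$: because $x[j+1..]$ is a suffix-palindrome of the palindrome $P = x[i+1..]$, it is also a prefix-palindrome of $P$, so $x[i+1..n-d]$ is a palindrome and $d$ is a period of $P$. Next, reflecting the position $j$ through the palindrome $P$ (whose center is $\tfrac{(i+1)+n}{2}$) gives $x[j] = x[n-d+1]$; combined with the hypothesis $x[i]=x[j]$ this yields $x[i] = x[n-d+1]$. Since the inner block $x[i+1..n-d]$ is already a palindrome with the same center as $x[i..n-d+1]$, adjoining the two \emph{equal} outer characters $x[i]$ and $x[n-d+1]$ shows that $x[i..n-d+1]$ is a palindrome. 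A short check confirms this substring has length $n-j+2 \ge 2$, so it is a bona fide palindromic substring even in the boundary case $j=n$, where $x[j+1..]$ is the empty suffix.

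Now I would invoke pal-matching once: $x[i..n-d+1]$ being a palindrome forces $y[i..n-d+1]$ to be a palindrome. Reading off its two endpoints gives $y[i] = y[n-d+1]$, while reflecting $j$ through the palindrome $y[i+1..]$ exactly as before gives $y[j] = y[n-d+1]$; chaining these yields $y[i]=y[j]$, as desired. I expect the main obstacle to be locating the \emph{correct} palindrome to transport: the naive candidate $x[i..n]$ need not be a palindrome even when $x[i]=x[j]$, and one must instead mirror the position $i$ through the center of the inner border $x[i+1..n-d]$ to land on $n-d+1$, which is precisely the endpoint that both certifies $x[i]=x[j]$ in $x$ and decodes $y[i]=y[j]$ in $y$. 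The remaining steps are routine range checks ensuring that all reflected positions stay within $[i+1..n]$.
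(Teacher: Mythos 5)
Your proof is correct and follows essentially the same route as the paper's: you identify the same length-$(n-j+2)$ palindrome (your $x[i..n-d+1]$ is the paper's $x[i..i+k+1]$ with $k=n-j$), establish it via the border/reflection structure of the palindrome $x[i+1..]$ together with $x[i]=x[j]$, transport it to $y$ by pal-matching, and decode $y[i]=y[j]$ by the same reflections. Your write-up is a bit more explicit about the reflected positions and the boundary case $j=n$ than the paper's, but the argument is the same.
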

\begin{proof}
  Since the palindrome $x[i+1..]$ has a suffix-palindrome of length $k = |x[j+1..]|$,
  it also has a prefix-palindrome of length $k$, that is,
  $x[i+1..i+k]$ is a palindrome.
  Also, $x[i+k+1] = x[j]$ holds.
  Since $x[i] = x[j] = x[i+k+1]$, $x[i..i+k+1]$ is a palindrome.

  Since $x$ and $y$ pal-match, $y[i+1..]$, $y[j+1..]$ and $y[i..i+k+1]$ are palindromes.
  By transition of equivalence induced by the palindromes $y[i..i+k+1]$ and $y[i+1..]$, we can see that $y[i] = y[i+k+1] = y[j]$.
  Thus the claim holds.
\end{proof}

Let the shortest palindrome in a suffix-pal-group be the representative of the group.
We assign consecutive integer identifiers starting from $1$ to the suffix-pal-groups in increasing order of their representative's lengths.
See Figure~\ref{fig:palGroup1} for example.
\begin{figure}[t]
 \centering{
   \includegraphics[scale=0.5]{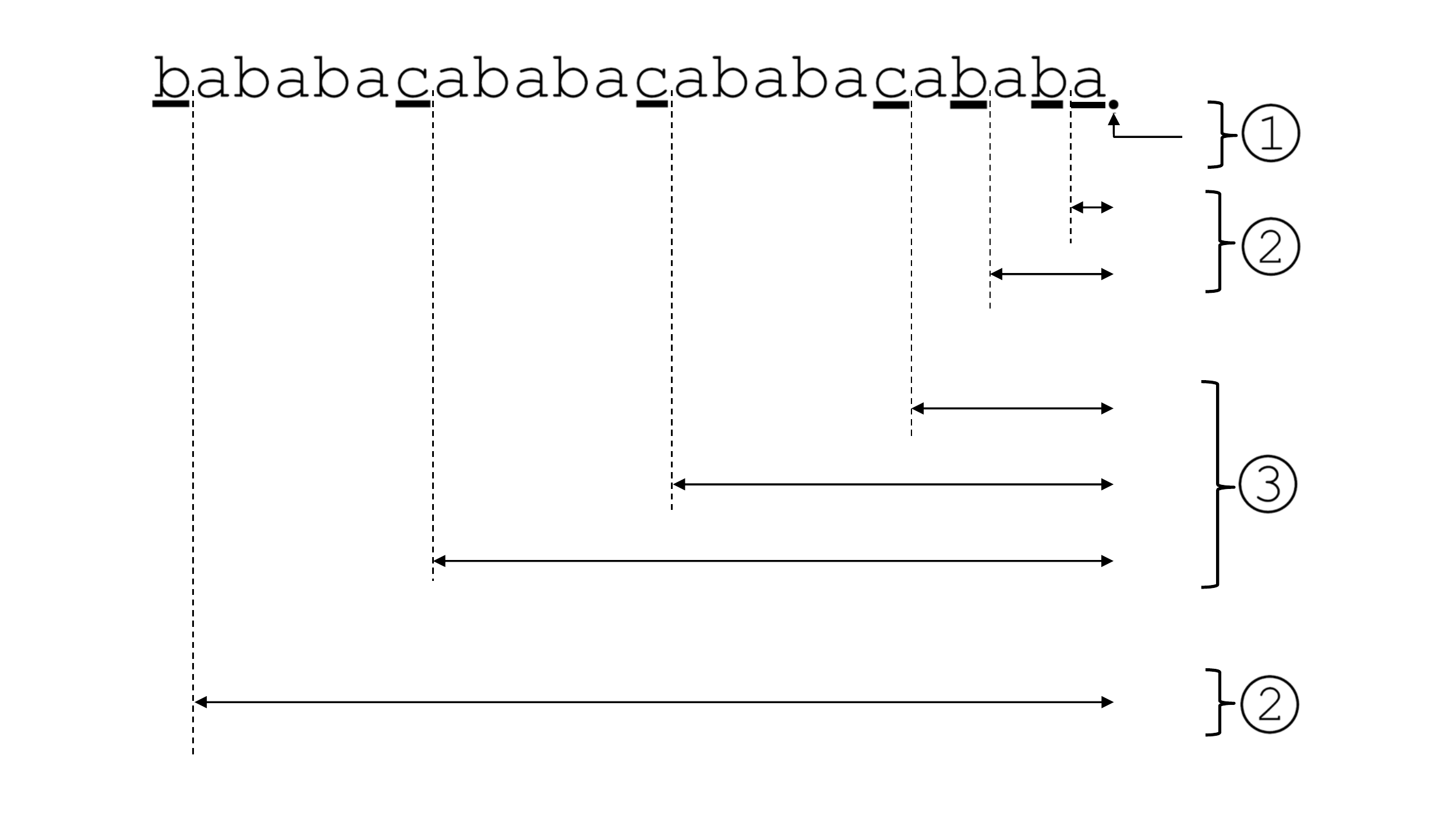}
 }
 \caption{An example of suffix-pal-groups for \idtt{bababababacababacababacababa}. The number enclosed in a circle denotes the pal-group-id.
 The suffix-palindromes in the suffix-pal-group with identifier $1$ (resp. $2$ and $3$) have \idtt{a} (resp. \idtt{b} and \idtt{c}) immediately to their left.
 The identifiers are given in increasing order of their representative's lengths, that is, $|\emptystr| = 0, |\idtt{a}| = 1$ and $|\idtt{ababa}| = 5$.
 }
 \label{fig:palGroup1}
\end{figure}

For any string $w$, we define the \emph{shortest suffix-pal-group encoding} $\sspg_{w}$ of $w$
as the integer array of length $|w|$ such that, for any position $1 \le i \le |w|$, 
$\sspg_{w}[i]$ is the identifier assigned to the suffix-pal-group of the suffix-palindrome in $w[..i-1]$ that is extended outwards by appending $w[i]$,
if such exists, and otherwise $\infty$. See Table~\ref{tbl:sspg} and Figure~\ref{fig:sspg_enc} for example.
Since the non-trivial shortest suffix of $w[..i]$ is extended outwards from the representative 
of the suffix-pal-group for $w[1..i-1]$ that has $w[i]$ immediately to the left,
$\sspg_{w}[i]$ has essentially equivalent information to $\ssp_{w}[i]$. Formally the next lemma holds.
\begin{lemma}\label{lem:ssp_sspg}
  For any string $x$ of length $k$, suppose we have the set of lengths of the representatives of suffix-pal-gropus of $x[..k-1]$.
  Given $\sspg_{x}[k]$ we can identify $\ssp_{x}[k]$, and vice versa.
\end{lemma}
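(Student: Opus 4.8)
The plan is to pin down the single arithmetic relation connecting the two encodings and then argue that knowing the lengths of all representatives (indexed by group identifier) lets us translate in both directions. Concretely, I will show that when $\sspg_{x}[k]$ is a finite identifier $g$, the non-trivial shortest suffix-palindrome of $x[..k]$ is obtained by extending the representative of group $g$ outward by one character on each side, so that $\ssp_{x}[k] = \ell_g + 2$, where $\ell_g$ denotes the length of that representative. The $\infty$ cases are handled together: $\sspg_{x}[k] = \infty$ exactly when $\ssp_{x}[k] = \infty$.

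First I would establish the central correspondence. Observe that appending $x[k]$ to $x[..k-1]$ turns a suffix-palindrome $p = x[j..k-1]$ of $x[..k-1]$ into a palindrome $x[j-1..k]$ of $x[..k]$ precisely when $x[j-1] = x[k]$, i.e.\ precisely when $p$ lies in the suffix-pal-group of $x[..k-1]$ associated with the character $x[k]$. This ``extend outward'' operation is a length-increasing-by-two bijection between the suffix-palindromes of $x[..k-1]$ whose left character is $x[k]$ and the non-trivial suffix-palindromes of $x[..k]$; its inverse strips the first and last character. Because the bijection preserves the length order, the shortest member on the left---which is by definition the representative of the group associated with $x[k]$---maps to the shortest member on the right, namely the non-trivial shortest suffix-palindrome of $x[..k]$ whose length is $\ssp_{x}[k]$. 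This yields $\ssp_{x}[k] = \ell_g + 2$ with $g = \sspg_{x}[k]$, covering also the edge case where the representative is the empty palindrome (so $\ssp_{x}[k] = 2$).

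Next I would note that the assignment $g \mapsto \ell_g$ of identifiers to representative lengths is injective: two suffix-palindromes of $x[..k-1]$ of equal length coincide and hence lie in the same group, so distinct groups have distinct representative lengths. Since the identifiers are assigned in increasing order of representative length, the given set of representative lengths is exactly the image of this map, laid out in sorted order. Therefore, from $g = \sspg_{x}[k]$ we read off $\ell_g$ and output $\ell_g + 2$; conversely from $\ssp_{x}[k] = s$ we set $\ell = s - 2$ and return the unique identifier whose representative has length $\ell$. The $\infty$ value is preserved in both directions, since $\sspg_{x}[k] = \infty$ means no suffix-palindrome of $x[..k-1]$ has $x[k]$ immediately to its left, which is the same as saying $x[..k]$ has no non-trivial suffix-palindrome.

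I expect the only delicate point to be verifying that the ``extend outward'' map is a genuine length-respecting bijection; in particular, that every non-trivial suffix-palindrome of $x[..k]$ really does arise this way, so that stripping its first and last character yields a suffix-palindrome of $x[..k-1]$ with left character $x[k]$, and that the empty-palindrome representative is handled uniformly. Once that correspondence is nailed down, the two-way translation between $\ssp_{x}[k]$ and $\sspg_{x}[k]$ given the table of representative lengths is immediate.
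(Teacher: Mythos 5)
Your proposal is correct and follows essentially the same route as the paper: both translate via the relation $\ssp_{x}[k] = \ell_g + 2$ where $\ell_g$ is the length of the representative of group $g = \sspg_{x}[k]$, handling the $\infty$ case separately. You simply spell out in more detail the ``extend outward'' bijection and the injectivity of the identifier-to-length map, which the paper treats as immediate from the definitions.
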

\begin{proof}
  It is clear that $\ssp_{x}[k] = \infty$ iff $\sspg_{x}[k] = \infty$.
  Given $\sspg_{x}[k] \neq \infty$ we can identify $\ssp_{x}[k]$ from the representative of the suffix-pal-group with identifier $\sspg_{x}[k]$.
  Given $\ssp_{x}[k] \neq \infty$ we can identify $\sspg_{x}[k]$ from the representative that has length $\ssp_{x}[k] - 2$.
\end{proof}

The next lemma shows that the $\sspg$-encoding is another encoding for pal-matching,
and induces the same lexicographic order with the $\ssp$-encoding.
\begin{lemma}\label{lem:sspg_match_order}
  Let $x$ and $y$ be strings of length $k$ such that $\ssp_{x}[..k-1] = \ssp_{y}[..k-1]$.
  Then, $\ssp_{x}[k] = \ssp_{y}[k]$ iff $\sspg_{x}[k] = \sspg_{y}[k]$.
  Also, $\ssp_{x}[k] < \ssp_{y}[k]$ iff $\sspg_{x}[k] < \sspg_{y}[k]$.
\end{lemma}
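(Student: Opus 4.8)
The plan is to reduce both equivalences to a single common order-preserving relabeling of the pal-group identifiers. First I would observe that the hypothesis $\ssp_{x}[..k-1] = \ssp_{y}[..k-1]$ is exactly $\ssp_{x[..k-1]} = \ssp_{y[..k-1]}$, since the value $\ssp_{w}[i]$ depends only on $w[..i]$; hence by Lemma~\ref{lem:ssp_match} the prefixes $x[..k-1]$ and $y[..k-1]$ pal-match. In particular, for every $p$ the substring $x[p..k-1]$ is a palindrome iff $y[p..k-1]$ is, so the suffix-palindromes of $x[..k-1]$ and of $y[..k-1]$ occur with exactly the same lengths; as each length determines a unique suffix, this gives a canonical length-preserving bijection between the two sets of suffix-palindromes.

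Next I would show that this bijection respects the partition into suffix-pal-groups. Applying Lemma~\ref{lem:same_groups} to the pal-matching strings $x[..k-1]$ and $y[..k-1]$, two suffix-palindromes of $x[..k-1]$ share the same immediately-preceding character iff the corresponding suffix-palindromes of $y[..k-1]$ do. Thus the ``same group'' relation transports across the bijection, so $x[..k-1]$ and $y[..k-1]$ have the same number $g$ of suffix-pal-groups and corresponding groups consist of suffix-palindromes of identical lengths. In particular each pair of corresponding groups shares a common representative (shortest) length, so the two strings have the same set of representative lengths $\lambda_1 < \cdots < \lambda_g$; and because identifiers are assigned in increasing order of representative length, the group labelled $j$ has representative length $\lambda_j$ in both strings.

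Having this common labelling, I would invoke the concrete correspondence behind Lemma~\ref{lem:ssp_sspg}: when $\sspg_{w}[k] = j \neq \infty$, the non-trivial shortest suffix-palindrome is obtained from the representative of group $j$ by adding one character on each side, so $\ssp_{w}[k] = \lambda_j + 2$; and $\ssp_{w}[k] = \infty$ iff $\sspg_{w}[k] = \infty$. Define $\phi(j) = \lambda_j + 2$ for $1 \le j \le g$ and $\phi(\infty) = \infty$. Since the $\lambda_j$ are strictly increasing, $\phi$ is a strictly increasing (hence injective and order-preserving) map on $\{1,\dots,g\} \cup \{\infty\}$, and by the previous step the very same $\phi$ governs both strings, i.e. $\ssp_{x}[k] = \phi(\sspg_{x}[k])$ and $\ssp_{y}[k] = \phi(\sspg_{y}[k])$. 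Order-preservation then yields $\ssp_{x}[k] = \ssp_{y}[k]$ iff $\sspg_{x}[k] = \sspg_{y}[k]$ and $\ssp_{x}[k] < \ssp_{y}[k]$ iff $\sspg_{x}[k] < \sspg_{y}[k]$, which are the two claims.

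The step I expect to be the main obstacle is establishing that corresponding suffix-pal-groups have equal representative lengths across the two strings. Everything downstream is a routine monotonicity argument, but this step is where I must combine the length-preserving bijection from pal-matching (which tells me the suffix-palindrome lengths agree) with Lemma~\ref{lem:same_groups} (which tells me the grouping agrees) to conclude that the per-group minima line up, and that the increasing-length identifier assignment then pairs $j$ with $j$ in both strings, so that a single $\phi$ serves for $x$ and $y$ simultaneously.
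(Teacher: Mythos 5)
Your proof is correct and takes essentially the same route as the paper's: both deduce from Lemma~\ref{lem:same_groups} that $x[..k-1]$ and $y[..k-1]$ share the same suffix-pal-group structure (you additionally make explicit the intermediate appeal to Lemma~\ref{lem:ssp_match} to see that these prefixes pal-match), then use the correspondence of Lemma~\ref{lem:ssp_sspg} together with the increasing-length assignment of identifiers to transfer both equality and order. Your write-up just fills in the details the paper leaves implicit, in particular the common strictly increasing map $\phi(j)=\lambda_j+2$ from group identifiers to $\ssp$-values.
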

\begin{proof}
  It follows from Lemma~\ref{lem:same_groups} that $x[..k-1]$ and $y[..k-1]$ have the same structure of suffix-pal-groups.
  By Lemma~\ref{lem:ssp_sspg}, $\ssp_x[k] = \ssp_y[k]$ if $\sspg_{x}[k] = \sspg_{y}[k]$, and vice versa.
  Since the identifiers of suffix-pal-groups are given in increasing order of their representative's lengths,
  it holds that $\ssp_{x}[k] < \ssp_{y}[k]$ if and only if $\sspg_{x}[k] < \sspg_{y}[k]$.
\end{proof}

\begin{table*}[t]
  \caption{A comparison between $\ssp_{w}$ and $\sspg_{w}$ for $w = \idtt{babbbabb}$.
  $\ssp_{w}[6] = 5$ because the non-trivial shortest suffix-palindrome of $w[1..6] = \idtt{babbba}$ is $\idtt{abbba}$, which is of length $5$.
  On the other hand, $\sspg_{w}[6] = 2$ because the shortest suffix-palindrome $\idtt{abbba}$ ending at $6$ is extended from $\idtt{bbb}$ 
  and the suffix-pal-group to which $\idtt{bbb}$ belongs for $w[1..5] = \idtt{babbb}$ has the identifier $2$.}
  \label{tbl:sspg}
  \centering{
    \begin{tabular}{rcccccccc}
      $w = $         & \idtt{b} & \idtt{a} & \idtt{b}  & \idtt{b} & \idtt{b} & \idtt{a}  & \idtt{b}  & \idtt{b} \\
      $\ssp_{w} = $   & $\infty$ & $\infty$ & 3 & 2        & 2        & 5         & 3         & 2 \\
      $\sspg_{w} = $  & $\infty$ & $\infty$ & 2 & 1        & 1        & 2         & 2         & 2 \\
    \end{tabular}
  }
\end{table*}

\begin{figure}[t]
 \centering{
   \includegraphics[scale=0.45]{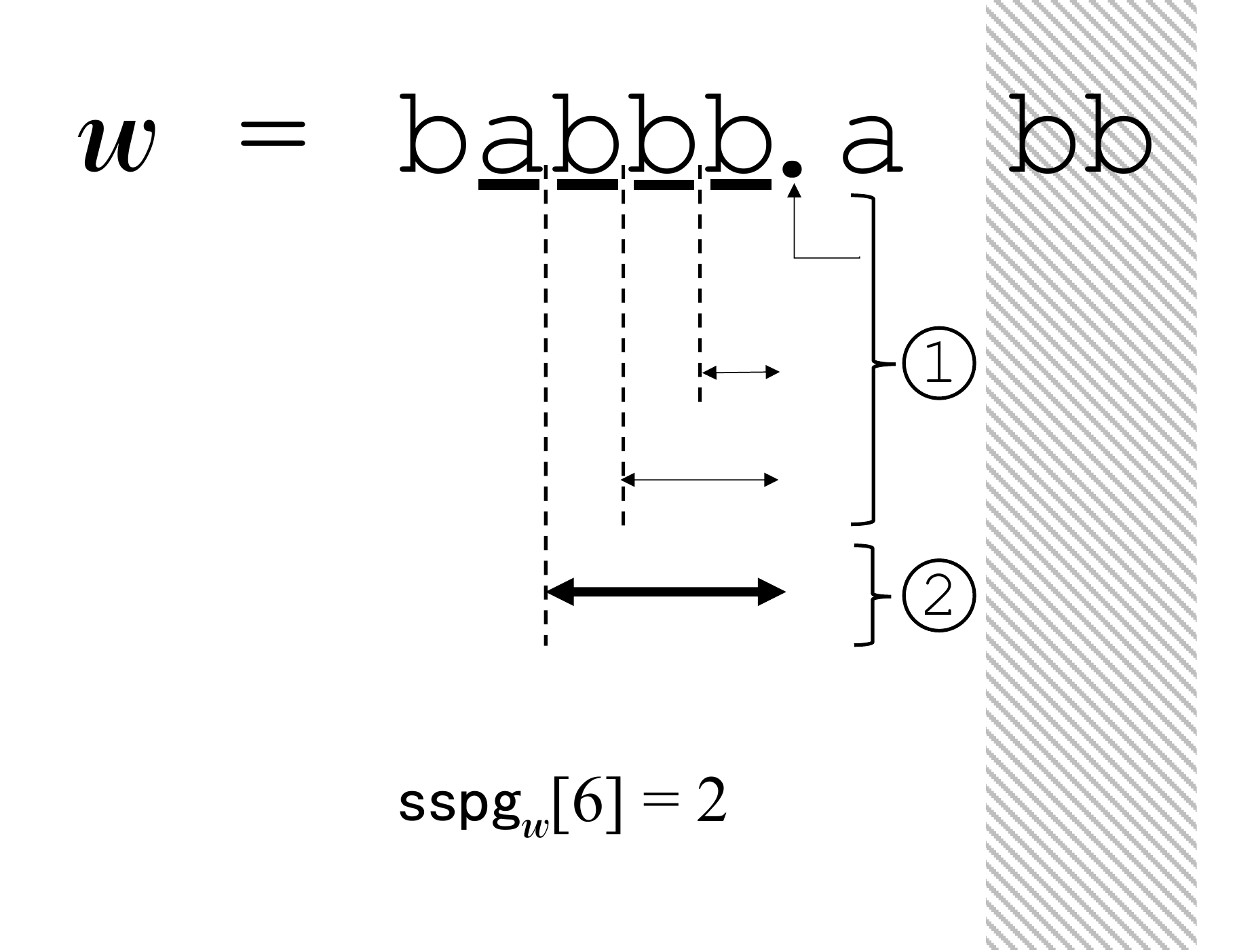}
 }
 \caption{Illustration to show $\sspg_{w}[6] = 2$ for $w = \idtt{babbbabb}$.
 }
 \label{fig:sspg_enc}
\end{figure}

For any string $w$, let $\sppg(w) = \sspg_{\rev{w}}[|w|]$.
Intuitively, $\sppg(w)$ holds the information from which prefix-palindrome of $w[2..]$
the non-trivial shortest prefix-palindrome of $w$ is extended, and the information is 
encoded with the identifier defined in the completely symmetric way as the case of the suffix-pal-groups.
The function $\sppg(\cdot)$ will be applied to the suffixes of $T$ to define $\palFstr$ and $\palLstr$,
and the next lemma is a key to implement LF-mapping for our PalFM-index.
\begin{lemma}\label{lem:sspg_stable}
  Let $x$ and $y$ be strings of length $\ge 1$ such that $\sppg(x) = \sppg(y)$.
  Then, $\ssp_x \prec \ssp_y$ iff $\ssp_{x[2..]} \prec \ssp_{y[2..]}$.
\end{lemma}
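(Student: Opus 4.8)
The plan is to reduce the comparison of the full encodings to that of $\ssp_{x[2..]}$ and $\ssp_{y[2..]}$ and to localize how prepending a character perturbs the $\ssp$-encoding. Write $x' = x[2..]$ and $y' = y[2..]$. Since $\ssp_x[1] = \ssp_y[1] = \infty$, we have $\ssp_x \prec \ssp_y$ iff $\ssp_x[2..] \prec \ssp_y[2..]$. By Lemma~\ref{lem:ssp_change}, $\ssp_x[2..]$ coincides with $\ssp_{x'}$ except at most at one index $p$, where $\ssp_{x'}[p] = \infty$ while $\ssp_x[p+1] = p+1$; such a $p$ exists iff $x$ has a non-trivial prefix-palindrome, in which case $x[..p+1]$ is the shortest one. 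I would first record that $\sppg(x) = \infty$ exactly when this change is absent, so the hypothesis $\sppg(x) = \sppg(y)$ already guarantees that either both $x$ and $y$ have change positions $p,q$ or neither does. Moreover, when $p$ exists, $x[2..p]$ is a prefix-palindrome of $x'$ with right neighbour $x[1]$, and it is the \emph{shortest} such: any shorter prefix-palindrome of $x'$ with right neighbour $x[1]$, flanked by $x[1]$ on both sides, would give a prefix-palindrome of $x$ shorter than $x[..p+1]$, a contradiction. Hence $\sppg(x)$ equals the rank of $p-1$ among the representative lengths of the prefix-pal-groups of $x'$ (the notion symmetric to suffix-pal-groups, obtained via $\rev{\cdot}$), and symmetrically for $y$ with $q$.

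If both $\sppg$-values are $\infty$, then $\ssp_x[2..] = \ssp_{x'}$ and $\ssp_y[2..] = \ssp_{y'}$ verbatim and the claim is immediate. Otherwise both change positions exist, and each perturbed array differs from $\ssp_{x'}$ (resp.\ $\ssp_{y'}$) only by lowering a single $\infty$ to a finite value ($p+1$ at index $p$, resp.\ $q+1$ at index $q$). If $p=q$, the two perturbations sit at the same index and carry the same value, so both sides deviate from the primed arrays at exactly the same position and agree there, whence the two comparisons coincide. So assume $p \ne q$, say $p<q$, and let $d$ be the first index where $\ssp_{x'}$ and $\ssp_{y'}$ differ. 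If $d<p$, the perturbations lie beyond $d$ and both comparisons are decided identically at $d$. If $d=p$, then $\ssp_{x'}[p]=\infty$ while $\ssp_{y'}[p]$ is finite and at most $p$, so $\ssp_{y'}\prec\ssp_{x'}$; the perturbed arrays compare $p+1$ against the same $\ssp_{y'}[p]\le p$ and split the same way. The remaining case $d>p$ — equivalently $\ssp_{x'}[..p]=\ssp_{y'}[..p]$ — is the crux.

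The hard part is to show this last configuration cannot occur. Assume $p<q$ and $\ssp_{x'}[..p]=\ssp_{y'}[..p]$, so by Lemma~\ref{lem:ssp_match} the prefixes $x'[..p]$ and $y'[..p]$ pal-match. Applying Lemma~\ref{lem:same_groups} to the reversals (so that equalities of the \emph{right} neighbours of prefix-palindromes are preserved under pal-matching) puts the prefix-pal-groups of $x'$ and of $y'$ whose representatives are shorter than $p-1$ into a length-preserving bijection; let $N$ be their common number. Then $\sppg(x)=N+1$, and $\sppg(x)=\sppg(y)$ forces the rank of $q-1$ among the representative lengths of $y'$ to be $N+1$ as well, i.e.\ no prefix-pal-group of $y'$ has a representative of length in $[p-1 .. q-2]$. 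Since $x'[..p-1]$ is a palindrome so is $y'[..p-1]$, and its group must therefore have a strictly shorter representative $y'[..s]$ with $s<p-1$ sharing the right neighbour $y'[p]$. Transporting the equality $y'[s+1]=y'[p]$ back through the pal-match by the same reversed form of Lemma~\ref{lem:same_groups} yields $x'[s+1]=x'[p]=x[1]$, exhibiting a prefix-palindrome $x'[..s]$ of length $s<p-1$ with right neighbour $x[1]$ and contradicting the minimality established in the first paragraph. Hence $d>p$ with $p<q$ is impossible; symmetrically for $q<p$; so every surviving case is consistent and the equivalence holds. The main obstacle is precisely this bridge from the rank equality $\sppg(x)=\sppg(y)$ to the positional coincidence that excludes $d>p$, which hinges on the preservation of right-neighbour character equalities of prefix-palindromes under pal-matching.
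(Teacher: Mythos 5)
Your proof is correct and follows essentially the same route as the paper's: both isolate the single possible perturbation via Lemma~\ref{lem:ssp_change} and then case-split on where that perturbation falls relative to the first index at which $\ssp_{x[2..]}$ and $\ssp_{y[2..]}$ disagree. The only substantive difference is that the paper discharges your ``crux'' (ruling out distinct perturbation positions $p \neq q$ at or below the first disagreement) in one line by invoking Lemma~\ref{lem:sspg_match_order} in its symmetric, prefix form, whereas you re-derive that fact from scratch via Lemma~\ref{lem:same_groups} and the rank interpretation of $\sppg$ --- a sound, if longer, substitute.
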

\begin{proof}
  Let $i$ be the largest integer such that $x[2..i]$ and $y[2..i]$ pal-match.
  Since $\sppg(x) = \sppg(y)$, using Lemma~\ref{lem:sspg_match_order} in a symmetric way,
  it holds that $x[..i]$ and $y[..i]$ pal-match.
  Recall Lemma~\ref{lem:ssp_change} that at most one $\infty$ in $\ssp_{x[2..]}$ (resp. $\ssp_{y[2..]}$) 
  turns into the largest possible integer at the changed position when prepending $x[1]$ (resp. $y[1])$.
  We analyze the cases focusing on the changed positions:
  \begin{enumerate}
    \item The claim clearly holds if neither $\ssp_x$ nor $\ssp_y$ has the changed position less than or equal to $i+1$.
    \item If both of $\ssp_x$ and $\ssp_y$ have the changed position at $j \le i+1$,
      it holds that $\ssp_x[j] = \ssp_y[j] = j$ and $\ssp_{x[2..]}[j-1] = \ssp_{y[2..]}[j-1] = \infty$,
      which also indicates that $j < i+1$.
      Since this change does not affect the lexicographic order, the claim holds.
      See the left part of Figure~\ref{fig:sspg_stable} for an illustration of this case.
    \item Assume $\ssp_y$ has the changed position at $j \le i+1$, but $\ssp_x$ does not.
      Since $x[..i]$ and $y[..i]$ pal-match, $j$ cannot be less than $i+1$, and hence, $j = i+1$ and
      $\ssp_x[i+1] = \ssp_{x[2..]}[i] \prec i+1 = \ssp_y[i+1] \prec \infty = \ssp_{y[2..]}[i]$.
      Note that the lexicographic order between $\ssp_x$ and $\ssp_y$ (resp. $\ssp_{x[2..]}$ and $\ssp_{y[2..]}$) is determined by
      that between $\ssp_x[i+1]$ and $\ssp_y[i+1]$ (resp. $\ssp_{x[2..]}[i]$ and $\ssp_{y[2..]}[i]$).
      Since the lexicographic order between $\ssp_x[i+1]$ and $\ssp_y[i+1]$ is the same as that between $\ssp_{x[2..]}[i]$ and $\ssp_{y[2..]}[i]$,
      the claim holds.
      See the right part of Figure~\ref{fig:sspg_stable} for an illustration of this case.
  \end{enumerate}
  Thus, we conclude that the lemma holds.
\end{proof}

\begin{figure}[t]
 \centering{
   \includegraphics[scale=0.45]{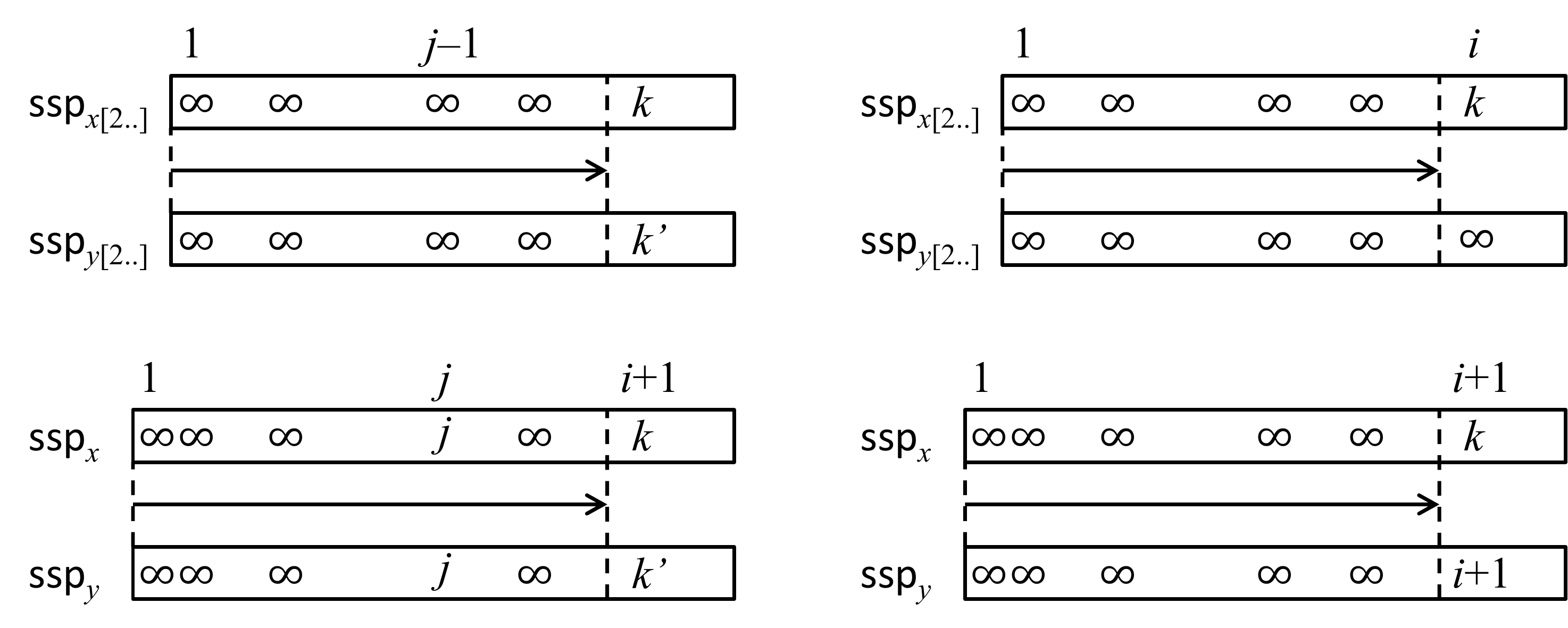}
 }
 \caption{The left (resp. right) figure illustrates the second (resp. third) case in the proof of Lemma~\ref{lem:sspg_stable}.
 }
 \label{fig:sspg_stable}
\end{figure}

\section{Computational results for new encodings}
In this section, we show that the $\ssp$- and $\sspg$-encodings can be computed in linear time for a given string.

We use the following known results.
\begin{lemma}[\cite{1975Manacher_NewLinearTimeOnLine}]\label{lem:Manacher}
  For any string $w$, we can compute all the maximal palindromes in $O(|w|)$ time.
\end{lemma}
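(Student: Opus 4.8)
The plan is to establish the classical Manacher algorithm: scan every center of $w$ from left to right, compute at each center the radius of the maximal palindrome there, and reuse previously computed radii via palindromic symmetry so that the total work stays linear.

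First I would reduce the task to computing, at each of the $O(|w|)$ candidate centers, the maximal palindromic radius. To treat odd- and even-length palindromes uniformly, I would conceptually insert a fresh separator symbol between consecutive characters of $w$ (and at both ends), producing a padded string of length $2|w|+1$ in which every palindrome of $w$ corresponds to an odd-length palindrome. The radii computed on the padded string translate back to the maximal palindromes of $w$ by simple index arithmetic, and since padding inflates the length by only a constant factor, a linear bound on the padded string yields the claimed $O(|w|)$ bound.

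Next, the core routine maintains, while scanning centers $i$ left to right, the center $c$ of the palindrome reaching farthest right so far together with its right boundary $r$. For the current center $i \le r$, I would initialize a lower bound on its radius using the palindrome centered at $c$: the mirror position $i' = 2c - i$ has a known radius, and because $w[c-t..c+t]$ is a palindrome the radius at $i$ is at least $\min$ of the mirror radius and the distance $r - i$. After this free initialization, I would extend the palindrome at $i$ by naive character comparisons, and finally update the pair $(c, r)$ whenever the palindrome at $i$ reaches past $r$.

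The main obstacle is the combined correctness-and-efficiency argument for this initialization-and-extension step. For correctness I would split into the case where the mirror palindrome lies strictly inside the palindrome at $c$ (the radius at $i$ equals the mirror radius exactly, and no naive extension can succeed) and the case where the mirror palindrome reaches or exceeds the left boundary of the palindrome at $c$ (the radius at $i$ is at least $r-i$ and may extend further), justifying each via the equalities $w[c-t] = w[c+t]$ that hold inside the palindrome at $c$. For the time bound I would use an amortized argument: the boundary $r$ never decreases, the mirror initialization performs no character comparisons, and every \emph{successful} naive comparison strictly increases $r$; hence the number of successful comparisons over all centers is $O(|w|)$, and each center incurs at most one failed comparison, giving $O(|w|)$ total. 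Combining the case analysis for correctness with this amortized linear bound establishes the lemma.
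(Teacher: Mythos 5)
The paper does not prove this lemma; it is stated as a known result with a citation to Manacher's 1975 algorithm, and your proposal is a correct, self-contained account of exactly that algorithm (padding to unify parities, mirror initialization with the $\min(p[2c-i],\,r-i)$ bound, and the amortized charge of successful comparisons to increases of $r$). No gaps: the case $i>r$ is implicitly covered by naive extension from radius $0$, and the at-most-one-failed-comparison-per-center accounting closes the linear bound.
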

\begin{lemma}[Lemma 3 in~\cite{I2013Ppm}]\label{lem:compute_lpal}
  For any string $w$, we can compute $\lpal_{w}$ in $O(|w|)$ time.
\end{lemma}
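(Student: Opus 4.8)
The plan is to reduce the computation of $\lpal_{w}$ to a linear-time postprocessing of the maximal palindromes, which are available in $O(|w|)$ time by Lemma~\ref{lem:Manacher}. First I would observe that every suffix-palindrome $w[j..i]$ of $w[..i]$ is contained in the maximal palindrome sharing its center $c = \frac{i+j}{2}$, and conversely, for any center $c \le i$ whose maximal palindrome reaches position $i$ on its right side, the reflection $w[2c-i..i]$ is a suffix-palindrome of $w[..i]$ of length $i-(2c-i)+1$, which is maximized precisely when $c$ is minimized (I treat even- and odd-length palindromes uniformly by allowing half-integer centers, as Manacher's algorithm does). Thus, writing $c^*(i)$ for the smallest center whose maximal palindrome covers $i$ as an ending position, we have $\lpal_{w}[i] = i - (2c^*(i)-i) + 1$, and the whole task becomes computing $c^*(i)$ for every $i$.

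The key step is to show that $c^*(i)$ is non-decreasing in $i$, which is what makes a single left-to-right sweep suffice. I would argue this by contradiction: if some center $c < c^*(i)$ were eligible at $i+1$ (that is, $c \le i+1$ and its maximal palindrome reaches $i+1$), then since $c < c^*(i) \le i$ its maximal palindrome also reaches $i$, so $c$ would already be eligible at $i$, contradicting the minimality of $c^*(i)$. Hence no center below $c^*(i)$ becomes eligible at $i+1$, so $c^*(i+1) \ge c^*(i)$.

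Given this monotonicity, I would maintain a single pointer that scans the centers in increasing order. For each $i$ from $1$ to $|w|$, I advance the pointer past every center whose maximal palindrome ends strictly before $i$ (such a center is expired for all larger ending positions as well, so discarding it permanently is safe) and stop at the first surviving center, which is $c^*(i)$; the trivial palindrome $w[i]$ at center $i$ guarantees that a valid center is always found before the scan reaches centers lying to the right of $i$. Because both $i$ and the center pointer move only forward, the total number of pointer advances is bounded by the number of centers, giving $O(|w|)$ time overall on top of Manacher's $O(|w|)$ preprocessing.

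The main obstacle is the monotonicity argument together with the bookkeeping that guarantees amortized constant work per position; the reduction itself and the length formula are routine once even- and odd-length palindromes are handled uniformly (for instance by the standard device of interleaving separators so that every palindrome has odd length).
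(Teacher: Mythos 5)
Your proof is correct and follows essentially the same route as the paper, which cites this result from~\cite{I2013Ppm} and sketches exactly this idea in the proof of Lemma~\ref{lem:compute_ssp}: the longest suffix-palindrome ending at $i$ is obtained by truncating at $i$ the maximal palindrome with the smallest center whose right end reaches $i$. Your monotonicity argument for $c^*(i)$ and the two-pointer sweep correctly supply the amortized $O(|w|)$ bookkeeping that the paper leaves implicit.
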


Using Lemmas~\ref{lem:Manacher} and~\ref{lem:compute_lpal}, we obtain:
\begin{lemma}\label{lem:compute_ssp}
  For any string $w$, we can compute $\ssp_{w}$ in $O(|w|)$ time.
\end{lemma}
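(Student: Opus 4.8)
The plan is to reduce the computation of $\ssp_{w}$ to a single geometric sweep over the maximal palindromes produced by Manacher's algorithm (Lemma~\ref{lem:Manacher}). First I would set up the correspondence between non-trivial suffix-palindromes of each prefix and maximal palindromes. Fix a position $i$ and regard $w[..i]$ as the current string. Every non-trivial suffix-palindrome of $w[..i]$ is right-anchored at $i$ and is therefore determined by its center $c$: the palindrome with center $c$ occupies $w[2c-i..i]$ and has length $2(i-c)+1$, so it is non-trivial exactly when $c \le i-\tfrac12$ (here $c$ ranges over the usual integer and half-integer centers, handling odd- and even-length palindromes uniformly). Writing $e(c)$ for the right end of the maximal palindrome at center $c$, the substring $w[2c-i..i]$ is a palindrome iff $e(c) \ge i$: a centred sub-interval of a palindrome is again a palindrome, and its left end $2c-i \ge 2c-e(c) \ge 1$ is automatic. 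Hence the shortest non-trivial suffix-palindrome corresponds to the \emph{largest} center $c \le i-\tfrac12$ with $e(c) \ge i$, and I would set $\ssp_{w}[i] = 2(i-c)+1$, or $\ssp_{w}[i]=\infty$ when no such center exists.

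The core step is to compute, for all $i$ simultaneously, the quantity $g(i)=\max\{\,c \le i-\tfrac12 : e(c) \ge i\,\}$ in linear time. I would process $i$ from $1$ to $n$ while maintaining a stack of candidate centers on which $c$ is strictly increasing and $e(c)$ is strictly decreasing from bottom to top. At step $i$ I first insert the two centers that have just entered the admissible range, namely $c=i-1$ and $c=i-\tfrac12$ (the centers in $(i-\tfrac32,\,i-\tfrac12]$), each with its right end taken from the Manacher output; when inserting a center I pop from the top every center whose right end does not exceed the new one. I then pop from the top every center with $e(c)<i$ (these have expired for the query at $i$), and read off $g(i)$ as the center now on top, or conclude $\ssp_{w}[i]=\infty$ if the stack is empty. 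Verifying on the running example $w=\idtt{babbbabb}$ reproduces the row $\ssp_{w}=[\infty,\infty,3,2,2,5,3,2]$ of Table~\ref{tbl:sspg}.

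The step I expect to be the real obstacle is proving that the first kind of pop is safe, i.e.\ that a center discarded because it is dominated is never the answer for any later position. Suppose a center $c'$ is popped because an incoming center $c_{new}>c'$ satisfies $e(c_{new}) \ge e(c')$. Every position still to be queried obeys $i' \ge i > c_{new} > c'$, so whenever $c'$ covers $i'$ (that is, $i' \le e(c')$) we also have $i' \le e(c') \le e(c_{new})$ together with $i' > c_{new}$, meaning $c_{new}$ covers $i'$ with a strictly larger center; thus $c'$ could never be the maximizer $g(i')$. With this invariant the stack faithfully exposes $g(i)$ at its top after the expiry pops, so each $\ssp_{w}[i]$ is computed correctly. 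Since each center is pushed and popped at most once over the whole sweep and every step performs $O(1)$ amortized work beyond the two insertions, the sweep runs in $O(n)$ time; adding the $O(|w|)$ cost of Manacher's algorithm yields the claimed bound. (One could instead feed in $\lpal_{w}$ from Lemma~\ref{lem:compute_lpal} to read off the longest suffix-palindrome at each position as a cross-check, but the sweep above already pins down every $\ssp_{w}[i]$ on its own.)
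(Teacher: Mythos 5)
Your proof is correct, but it takes a genuinely different route from the paper's. The paper also starts from Manacher's algorithm, but then computes two auxiliary arrays --- $\lpal_{w}[i]$, the length of the longest suffix-palindrome of $w[..i]$, and $\lpal'_{w}[i]$, the length of the second longest --- and fills in $\ssp_{w}$ left to right by a three-way case analysis whose key step is a recursive lookup: when $\lpal_{w}[i] > 1$ and $\lpal'_{w}[i] > 1$, the shortest non-trivial suffix-palindrome ending at $i$ also ends at position $i - \lpal_{w}[i] + \lpal'_{w}[i]$, so $\ssp_{w}[i]$ equals an already-computed entry. That argument leans on a non-obvious periodicity property of nested suffix-palindromes. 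You instead recast the problem purely geometrically --- $\ssp_{w}[i]$ is determined by the largest center $c \le i - \frac{1}{2}$ whose maximal palindrome reaches position $i$ --- and solve it with a monotone stack whose correctness rests only on the elementary domination argument you give (a center with smaller $c$ and no larger right end can never be the maximizer later). Your route avoids both the second-longest-palindrome array and the periodicity fact, at the cost of a slightly more delicate data-structure invariant; both run in $O(|w|)$ time after Manacher's $O(|w|)$ preprocessing, and your worked check against $w = \idtt{babbbabb}$ confirms the reduction is stated correctly.
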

\begin{proof}
  Manacher's algorithm~\cite{1975Manacher_NewLinearTimeOnLine} can compute the radius of the maximal palindrome in increasing order of centers in linear time.
  It can be extended to compute the length $\lpal_{w}[i]$ of the longest palindrome ending at each position $i$ 
  because the maximal palindrome with the smallest center that ends at position $\ge i$ 
  gives us the longest suffix-palindrome ending at $i$ by truncating the palindrome at $i$ (e.g., see Lemma 3 of~\cite{I2013Ppm}).
  In a similar way, we can compute the length $\lpal'_{w}[i]$ of the second longest palindrome ending at $i$.

  Using $\lpal_{w}$ and $\lpal'_{w}$, we can compute $\ssp_{w}[i]$ in increasing order as follows:
  \begin{enumerate}
    \item If $\lpal_{w}[i] = 1$, then $\ssp_{w}[i] = \infty$.
    \item If $\lpal_{w}[i] > 1$ and $\lpal'_{w}[i] = 1$, then $\ssp_{w}[i] = \lpal_{w}[i]$.
    \item If $\lpal_{w}[i] > 1$ and $\lpal'_{w}[i] > 1$, then $\ssp_{w}[i] = \ssp_{w}[i - \lpal_{w}[i] + \lpal'_{w}[i]]$.
  \end{enumerate}
  In the third case, we use the fact that the non-trivial shortest suffix-palindrome ending at $i$ has length $\le \lpal'_{w}[i]$ and 
  it ends at $i - \lpal_{w}[i] + \lpal'_{w}[i]$, too.

  Clearly all can be done in $O(|w|)$ time.
\end{proof}

For any string $w$, let $\nsspg_w$ denote the array of length $|w|$ such that
$\nsspg_w[i]$ stores the number of suffix-pal-groups for $w[..i]$.
\begin{lemma}\label{lem:compute_nsspg}
  For any string $w$, we can compute $\nsspg_w$ in $O(|w|)$ time.
\end{lemma}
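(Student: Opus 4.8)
The plan is to piggyback on Manacher's algorithm, exactly as in the proof of Lemma~\ref{lem:compute_ssp}, and to maintain the suffix-palindrome structure incrementally while scanning $w$ from left to right, reading off $\nsspg_w[i]$ at each step $i$.

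First I would recall the combinatorial backbone used in Lemma~\ref{lem:num_groups}: the suffix-palindromes of $w[..i]$ partition into $O(\lg i)$ maximal arithmetic progressions of lengths, where each progression carries a period $d$ equal to its common difference. The point I would exploit is that, by this periodicity, the characters immediately to the left of all palindromes of one progression coincide, \emph{except possibly} its longest member, whose left neighbour lies outside the progression's enclosing palindrome and may differ. Hence the distinct left characters, i.e.\ the suffix-pal-groups, are determined by the $O(\lg i)$ progressions together with their $O(\lg i)$ longest members (plus the two base contributions $w[i]$ and $w[i-1]$ coming from the empty and the single-character suffix-palindrome), and $\nsspg_w[i]$ is exactly the number of distinct characters in this $O(\lg i)$-size multiset. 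This is the quantity I want to track.

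The main work is to maintain these progressions as $i$ grows. When $w[i]$ is appended, every suffix-palindrome of $w[..i]$ of length $\ge 2$ is obtained by wrapping $w[i]$ around a suffix-palindrome of $w[..i-1]$ that is preceded by $w[i]$; equivalently, only the single suffix-pal-group of $w[..i-1]$ whose left character equals $w[i]$ survives, its members get extended, and the extended palindromes acquire as their left characters the characters two positions to their left in $w[..i-1]$. Using suffix-link/series-link machinery one locates the longest such survivor, from which the whole new family of progressions for $w[..i]$ is rebuilt; the radii needed to decide palindromicity are supplied by Manacher's maximal-palindrome array (Lemma~\ref{lem:Manacher}), and the arrays $\lpal_w$ and $\lpal'_{w}$ from Lemma~\ref{lem:compute_ssp} pinpoint the longest and second-longest suffix-palindromes in $O(1)$, letting me find progression boundaries. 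To report $\nsspg_w[i]$ I would keep a global per-character counter, incremented/decremented as progressions and their heads enter or leave the current suffix-palindrome family, so that the running number of distinct present left characters is available in $O(1)$; its correctness follows directly from the periodicity observation above.

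The step I expect to be the main obstacle is the time analysis. Naively there are $\Theta(\lg i)$ progressions at step $i$, so touching all of them would cost $O(n \lg n)$ rather than $O(n)$. To bring this down to linear I would argue, by the standard amortization for the incremental palindromic (eertree-style) structure, that the total number of progression create/extend/merge events over the entire left-to-right scan is $O(n)$, so that the aggregate cost of updating the progressions and the character counter telescopes to $O(|w|)$. Making this amortized accounting precise---in particular bounding how often a progression's head changes, and updating the distinct-character counter without re-examining every progression at every step---is the delicate part of the argument.
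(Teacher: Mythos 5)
Your combinatorial setup is sound: the suffix-palindromes of $w[..i]$ do fall into $O(\lg i)$ arithmetic progressions, and within a progression all members except possibly the longest share the same left neighbour (the shorter members sit inside the longest one and are related by its period). But the proof has a genuine gap exactly where you flag it: the time analysis. The claim that ``the total number of progression create/extend/merge events over the entire scan is $O(n)$'' is not a consequence of the standard eertree amortization. That amortization bounds the number of \emph{node creations} (at most $n+1$ distinct palindromic substrings), not the number of changes to the per-position family of series and their left characters; when a character $w[i]$ is appended, all but one suffix-pal-group dies and the surviving group re-spreads over a possibly different set of progressions with new left characters, so the structure you want to maintain can change by $\Theta(\lg i)$ in a single step. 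This is precisely why the straightforward series-link approach to enumerating all suffix-palindrome series at every position runs in $O(n\lg n)$, and why the known linear-time results in this vein (e.g.\ the palindromic-factorization literature you cite) required substantially heavier machinery than a telescoping argument. As written, your argument establishes $O(|w|\lg|w|)$, not $O(|w|)$.

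The paper sidesteps incremental maintenance entirely. Each suffix-pal-group is counted via its unique shortest member (its representative): the suffix-palindromes of $w[..j]$ that are \emph{not} preceded by $w[j+1]$ are exactly the maximal palindromes ending at $j$, and such a maximal palindrome $w[i..j]$ is a representative iff $i=1$ or $\spp_w[i-1] > |w[i-1..j]|$, where $\spp_w$ is the prefix-analogue of $\ssp_w$ (computable in linear time by the symmetric version of Lemma~\ref{lem:compute_ssp}); the one remaining group, whose members are preceded by $w[j+1]$, exists iff $\lpal_w[j+1]>1$. Since there are only $O(|w|)$ maximal palindromes overall and each is inspected once with an $O(1)$ test, the whole computation is linear with no amortization needed. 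If you want to salvage your route, you would have to either prove the $O(n)$ event bound (which I do not believe holds) or import the heavy linear-time series-maintenance machinery; the representative-counting argument is far simpler.
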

\begin{proof}
  Let $\spp_w$ be the array defined in a symmetric way of $\ssp_w$
  such that $\spp_w[i]$ stores the length of the non-trivial shortest prefix-palindrome starting at $i$ (or $\infty$ if such a palindrome does not exist).
  Using Lemma~\ref{lem:compute_ssp} in a symmetric way, we can compute $\spp_w$ in $O(|w|)$ time.

  Let us focus on the palindromes involved in $\nsspg_w[j]$.
  First, there is a suffix-pal-group for $w[..j]$ that has $w[j+1]$ immediately to their left iff $\lpal_w[j+1] > 1$.
  Next observe that the palindromes in other suffix-pal-groups for $w[..j]$, which do not have $w[j+1]$ immediately to their left, are the maximal palindromes ending at $j$.
  Also, a maximal palindrome $w[i..j]$ is the representative (i.e., the shortest palindrome) in a suffix-pal-group to which it belongs.
  if and only if $\spp_w[i-1] > |w[i-1..j]|$ or $i = 1$. See Figure~\ref{fig:nsspg_obs} for illustrations of these observations.

  Based on the above observations, we compute $\nsspg_w$ as follows:
  First, we compute the maximal palindromes and $\lpal_w$ in $O(|w|)$ time by Lemmas~\ref{lem:Manacher} and~\ref{lem:compute_lpal}.
  Next we check every maximal palindrome and assign it to its ending position if it is a representative, which can be done in $O(|w|)$ time in total.
  We also check if $\lpal_w[j+1] > 1$ for all positions $j$ in $O(|w|)$ time to count a suffix-pal-group that has $w[j+1]$ immediately to their left.
  To sum up, $\nsspg_w$ can be computed in $O(|w|)$ time.
\end{proof}

\begin{figure}[t]
 \centering{
   \includegraphics[scale=0.45]{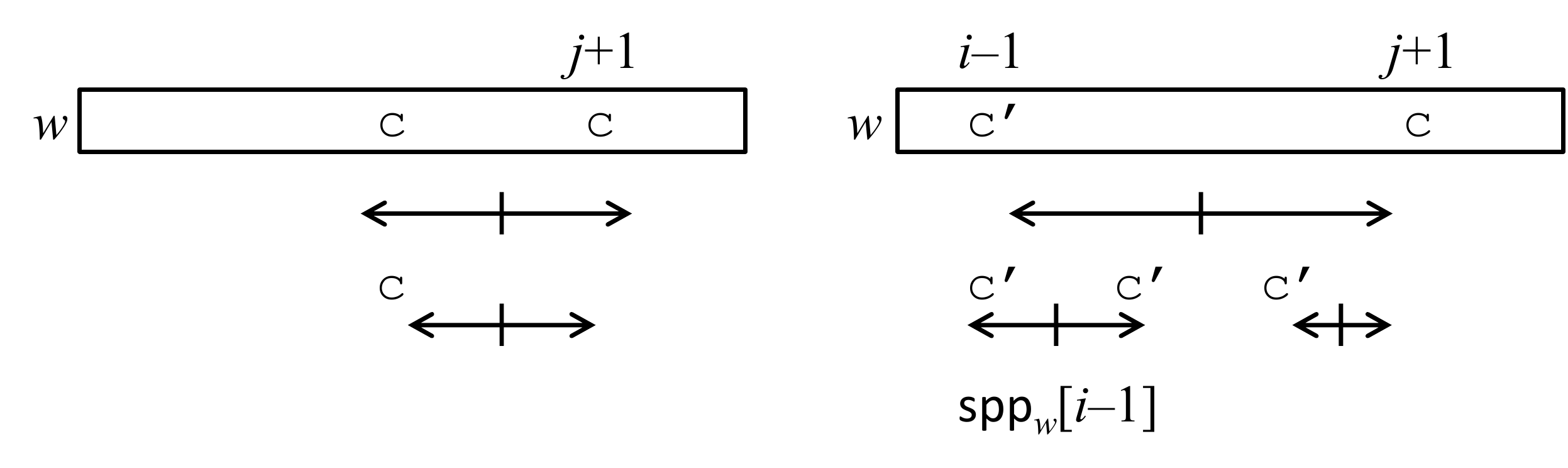}
 }
 \caption{The left figure illustrates the case with $\lpal_w[j+1] > 1$, in which we see that there is a suffix-pal-group for $w[..j]$ that has $w[j+1] = \idtt{c}$ immediately to their left. The right figure illustrates the case with $\spp_w[i-1] \le |w[i-1..j]|$, in which we see that the maximal palindrome $w[i..j]$ is not the representative because there is a shorter palindrome that ends at $j$ and has the same character $\idtt{c}'$ immediately to the left.
 }
 \label{fig:nsspg_obs}
\end{figure}

Generalizing the algorithm presented in the proof of Lemma~\ref{lem:compute_nsspg}, we obtain:
\begin{lemma}\label{lem:compute_sspg}
  For any string $w$, we can compute $\sspg_{w}$ in $O(|w|)$ time.
\end{lemma}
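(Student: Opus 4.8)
The plan is to upgrade the counting argument behind Lemma~\ref{lem:compute_nsspg} so that, rather than only counting the suffix-pal-groups of each prefix, it reports the identifier---the rank by representative length---of the single group that is extended at each position. By Lemma~\ref{lem:ssp_sspg}, whenever $\ssp_{w}[i]\neq\infty$ the value $\sspg_{w}[i]$ is the identifier of the suffix-pal-group of $w[..i-1]$ whose representative has length $\ssp_{w}[i]-2$. Since identifiers increase with representative length and, within a fixed prefix, there is at most one suffix-palindrome of each length (so the representative lengths are pairwise distinct), this identifier equals the number of suffix-pal-groups of $w[..i-1]$ whose representative has length at most $\ssp_{w}[i]-2$. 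Setting $\sspg_{w}[i]=\infty$ when $\ssp_{w}[i]=\infty$, it remains to evaluate this rank for every $i$ with $\ssp_{w}[i]\neq\infty$.

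To turn the rank into a single count I would classify the representatives involved exactly as in the proof of Lemma~\ref{lem:compute_nsspg}. The empty suffix is always the length-$0$ representative of its own group. Among the other representatives of length at most $\ssp_{w}[i]-2$, the one of length exactly $\ssp_{w}[i]-2$ is the representative of the extended group, whereas each representative of strictly smaller length has left character different from $w[i]$ and is therefore a maximal palindrome of $w$ ending at $i-1$ (maximality at the right end is exactly the condition $w[a-1]\neq w[i]$), detected by the test $\spp_{w}[a-1]>|w[a-1..i-1]|$ or $a=1$ supplied by Lemma~\ref{lem:compute_nsspg}. A case analysis on whether $w[i-1]=w[i]$ then collapses the rank to one count: if $w[i-1]=w[i]$, equivalently $\ssp_{w}[i]=2$, the extended group is the empty-suffix group and $\sspg_{w}[i]=1$; otherwise the two groups differ, the empty suffix contributes a length $0<\ssp_{w}[i]-2$, and $\sspg_{w}[i]=2+\big|\{a : w[a..i-1]\text{ is a representative maximal palindrome with }|w[a..i-1]|<\ssp_{w}[i]-2\}\big|$.

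Concretely, I would first compute the maximal palindromes together with $\lpal_{w}$, $\ssp_{w}$, and, symmetrically, $\spp_{w}$ in $O(|w|)$ time via Lemmas~\ref{lem:Manacher}, \ref{lem:compute_lpal}, and~\ref{lem:compute_ssp}. Reusing the assignment step of Lemma~\ref{lem:compute_nsspg}, I bucket each representative maximal palindrome at its ending position and sort every bucket by palindrome length; as all lengths are at most $|w|$, this is a single radix sort in $O(|w|)$. Finally, for each $i$ with $\ssp_{w}[i]\neq\infty$ I apply the formula above, obtaining the count in the non-trivial case by a binary search for the threshold $\ssp_{w}[i]-2$ in the sorted bucket attached to position $i-1$.

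The step I expect to be the crux is showing that all of these rank queries together cost only $O(|w|)$ rather than the naive $O(|w|\lg|w|)$. The decisive observation is that a maximal palindrome has a unique ending position and hence lies in at most one bucket, so the total size of all buckets is bounded by the number of maximal palindromes of $w$, which is $O(|w|)$ (one per center). Writing $m_j$ for the size of the bucket at position $j$, the binary searches cost $O\big(|w|+\sum_{j}\lg(m_j+1)\big)$, and since $\lg(m_j+1)\le m_j$ and $\sum_{j}m_j=O(|w|)$, this is $O(|w|)$. Hence $\sspg_{w}$ is computed in $O(|w|)$ time overall.
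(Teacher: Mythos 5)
Your proposal is correct and follows essentially the same route as the paper: both reduce $\sspg_{w}[i]$ to counting, via the $\spp_{w}$-based representative test of Lemma~\ref{lem:compute_nsspg}, the representative maximal palindromes ending at $i-1$ whose length falls below the threshold determined by $\ssp_{w}[i]$, plus a constant offset for the empty-suffix/extended group. The only (harmless) difference is that you sort each bucket and binary-search for the threshold, whereas the paper filters each maximal palindrome directly in $O(1)$ by the condition $\ssp_{w}[j+1]-1 > j-i+1$, since the threshold depends only on the palindrome's ending position; your explicit bookkeeping of the $+1$ versus $+2$ offset is in fact a bit more careful than the paper's.
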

\begin{proof}
  We modify the algorithm presented in the proof of Lemma~\ref{lem:compute_nsspg} slightly.
  Now the task is to count, for every position $j+1$, the number of suffix-pal-groups for $w[..j]$ 
  whose representative is shorter than $\ssp[j+1] - 1$ because the number is exactly $\sspg_{w}[j+1]$ by definition.
  We check every maximal palindrome $w[i..j]$
  and assign it to its ending position $j$ if $\spp_w[i-1] > |w[i-1..j]|$ and $\ssp[j+1] - 1 > j - i + 1$.
  Finally the number of representatives assigned to $j$ plus one is $\sspg_{w}[j+1]$.
  Similarly to the proof of Lemma~\ref{lem:compute_nsspg}, all can be done in $O(|w|)$ time.
\end{proof}

\section{PalFM-index}
The PalFM-index of $T$ conceptually sort the suffixes of $T$ in lexicographic order of their $\ssp$-encodings (or equivalently $\sspg$-encodings).
Let $\palSA$ be the integer array of length $n+1$ such that
$\palSA[i]$ is the starting position of the $i$-th suffix of $T$ in $\ssp$-encoded order.
We define the strings $\palFstr$ and $\palLstr$ of length $n+1$ based on $\sppg$ function applied to the sorted suffixes.
Formally, for any position $i~(1 \le i \le n+1)$ we define:
\begin{equation*}
  \palFstr[i] =
  \begin{cases}
    \$                      & \text{if $i = 1$,}\\
    \sppg(T[\palSA[i]..])   & \text{otherwise.}
  \end{cases}
\end{equation*}
\begin{equation*}
  \palLstr[i] =
  \begin{cases}
    \$                      & \text{if $\palSA[i] = 1$,}\\
    \sppg(T[\palSA[i]-1..]) & \text{otherwise.}
  \end{cases}
\end{equation*}
See Figure~\ref{fig:palT} for example.

As in the case of $\LF$, we define a function $\palLF: i \mapsto j$
so that $\palSA[j] = \palSA[i] - 1$ (with the corner case $\palLF(i) = 1$ for $\palSA[i] = 1$).
Thanks to Lemma~\ref{lem:sspg_stable}, for any value $c$, 
the suffixes used to obtain $i$-th $k$ in $\palLstr$ and in $\palFstr$ are the same,
which enables us to implement the $\palLF$ function by
$\palLF(i) = \select_{\palFstr}(\rank_{\palLstr}(i, \palLstr[i]), \palLstr[i])$.
See Figure~\ref{fig:LFmap} for an illustration.

\begin{figure}[t]
  \centering{
    \begin{tabular}{|c|l|l|l|c|c|c|c|}
      \hline
      $i$ & $T[i..]$           & $\ssp_{T[i..]}$                    & $\ssp_{T[\palSA[i]..]}$            & $\palSA[i]$ & $\palFstr[i]$ & $\palLstr[i]$  & $\palLF(i)$\\ \hline
      1   & \idtt{abbabbcbc} & \idtt{\infty\infty2432\infty33} & $\emptystr$                     & 10          & \idtt{\$}     & \idtt{\infty} & 2  \\
      2   & \idtt{bbabbcbc}  & \idtt{\infty2\infty32\infty33}  & \idtt{\infty}                   & 9           & \idtt{\infty} & \idtt{\infty} & 5  \\
      3   & \idtt{babbcbc}   & \idtt{\infty\infty32\infty33}   & \idtt{\infty2\infty32\infty33}  & 2           & \idtt{1}      & \idtt{2}      & 6  \\
      4   & \idtt{abbcbc}    & \idtt{\infty\infty2\infty33}    & \idtt{\infty2\infty33}          & 5           & \idtt{1}      & \idtt{\infty} & 7  \\
      5   & \idtt{bbcbc}     & \idtt{\infty2\infty33}          & \idtt{\infty\infty}             & 8           & \idtt{\infty} & \idtt{2}      & 8  \\
      6   & \idtt{bcbc}      & \idtt{\infty\infty33}           & \idtt{\infty\infty2432\infty33} & 1           & \idtt{2}      & \idtt{\$}     & 1  \\
      7   & \idtt{cbc}       & \idtt{\infty\infty3}            & \idtt{\infty\infty2\infty33}    & 4           & \idtt{\infty} & \idtt{2}      & 9  \\
      8   & \idtt{bc}        & \idtt{\infty\infty}             & \idtt{\infty\infty3}            & 7           & \idtt{2}      & \idtt{2}      & 10 \\
      9   & \idtt{c}         & \idtt{\infty}                   & \idtt{\infty\infty32\infty33}   & 3           & \idtt{2}      & \idtt{1}      & 3  \\
      10  & $\emptystr$      & $\emptystr$                     & \idtt{\infty\infty33}           & 6           & \idtt{2}      & \idtt{1}      & 4  \\ \hline
    \end{tabular}
  }
  \caption{An example of $\palSA[i]$, $\palFstr[i]$ and $\palLstr[i]$ for $T = \idtt{abbabbcbc}$.}
  \label{fig:palT}
\end{figure}

\begin{figure}[t]
 \centering{
   \includegraphics[scale=0.45]{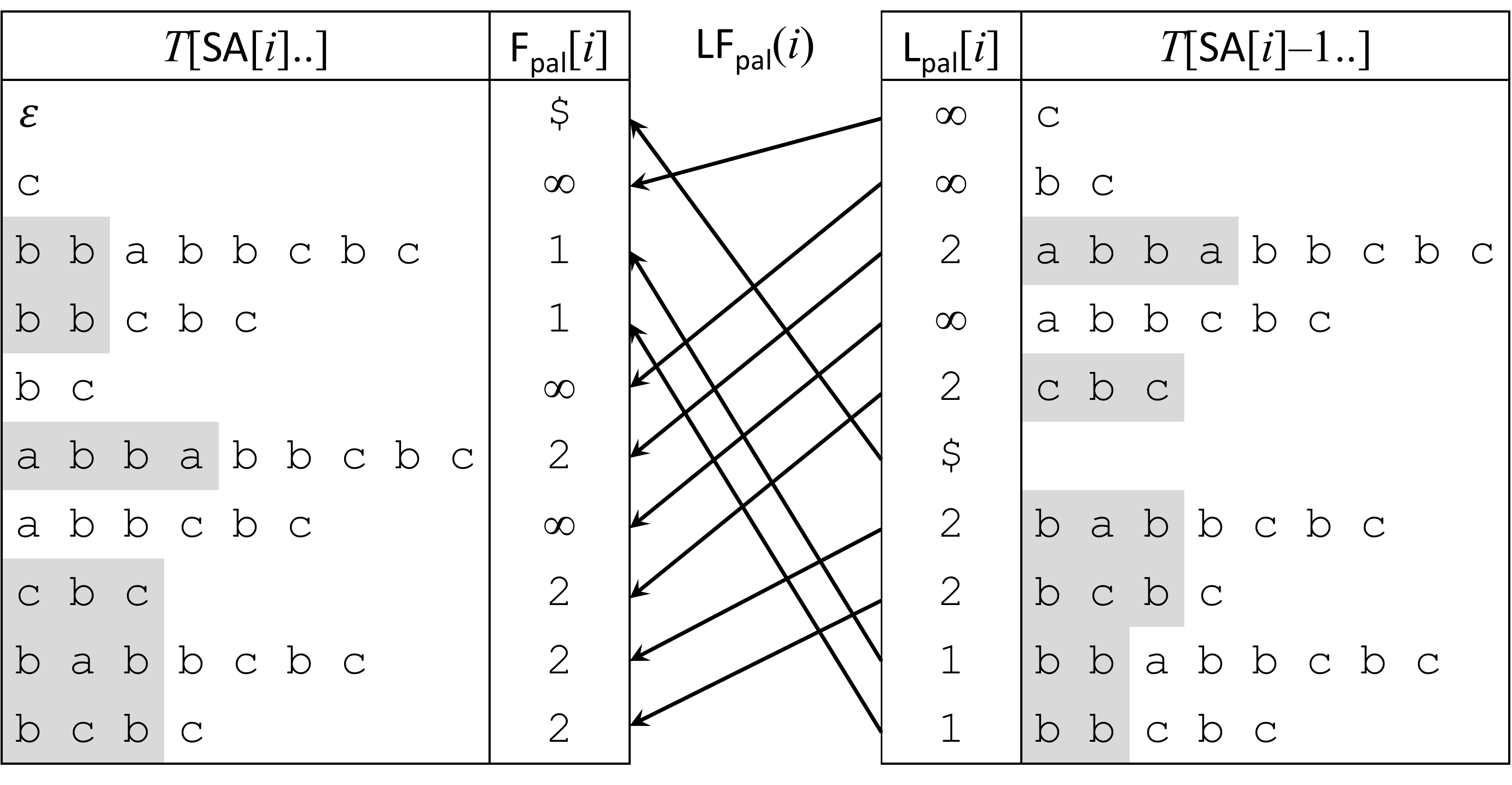}
 }
 \caption{An illustration for $\palFstr[i]$, $\palLstr[i]$ and $\palLF(i)$.
   Except the corner cases that have \idtt{\$}, $\palFstr[i]$ and $\palLstr[i]$ are defined by $\sppg(T[\palSA[i]..])$ and $\sppg(T[\palSA[i]-1..])$, respectively.
   Since $\sppg(w)$ encodes the information about the non-trivial shortest prefix of $w$,
   in each row the non-trivial shortest prefix is shown in grayed background.
   For example, $\sppg(\idtt{abbabbcbc}) = \idtt{2}$ because its non-trivial shortest prefix-palindrome \idtt{abba}
   is extended from the prefix-palindrome \idtt{bb} of \idtt{bbabbcbc} and \idtt{bb} belongs to 
   the prefix-pal-group with the identifier $\idtt{2}$.
   Observe that $\palFstr$ is a permutation of $\palLstr$ since
   both $\palFstr$ and $\palLstr$ use every suffix $w$ of $T$ exactly once to obtain $\sppg(w)$.
   Roughly speaking, $\palLF(\cdot)$ is meant to map a row having a suffix $w$ in the $T[\palSA[i]-1..])$ column
   to the row having the same suffix $w$ in the $T[\palSA[i]..]$ column.
   Thanks to Lemma~\ref{lem:sspg_stable}, for any value $k$, 
   the suffixes used to obtain $i$-th $k$ in $\palLstr$ and in $\palFstr$ are the same,
   and hence, one can observe visually that the arrows starting from the same $\palLstr$-value are not crossed.
 }
 \label{fig:LFmap}
\end{figure}

For any string $w$, let $w$-interval refer to the maximal interval $[b..e]$ such that $\ssp_{T[\palSA[i]..]}$ is prefixed by $\ssp_{w}$,
where $w$-interval is empty if there is no substring of $T$ that pal-matches with $w$.
Notice that the substring of $T$ of length $|w|$ starting at $\palSA[i]$ pal-matches with $w$ iff $i \in [b..e]$.
A single step of backward search computes $cw$-interval from $w$-interval for some character $c$.

The following theorems are the main contributions of this paper.
\begin{theorem}\label{theo:pal_counting}
  Let $T$ be a string of length $n$ over an alphabet of size $\sigma$.
  There is a data structure of $2n \lg \min(\sigma, \lg n) + 2n + o(n)$ bits of space 
  to support the counting queries for the pal-matching problem in $O(m)$ time,
  where $m$ is the length of a given pattern $P$.
\end{theorem}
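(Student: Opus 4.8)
The plan is to realize a backward search over the suffixes of $T$ sorted by their $\ssp$-encodings, exactly as the classical FM-index does over the suffixes sorted lexicographically, using $\palLstr$, $\palFstr$ and the mapping $\palLF$ already set up above. I would represent both $\palLstr$ and $\palFstr$ by the succinct structure of Theorem~\ref{theo:wt}, which supports $\rank$, $\select$ and $\rangecount$. By Lemma~\ref{lem:num_groups} together with the definitions of $\palLstr$ and $\palFstr$, every character of $\palLstr$ and $\palFstr$ lies in $\{\$,\infty\}\cup\{1,2,\dots,O(\min(\sigma,\lg n))\}$, so over this alphabet the two structures account for the $2n\lg\min(\sigma,\lg n)$ leading term. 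To these I would add the $\rMq$ structure of Theorem~\ref{theo:rmq}, using $2n+o(n)$ bits, built over an auxiliary array that stores, in $\palSA$-order, the length of the shortest non-trivial prefix-palindrome of each suffix; its role is explained below. The three components together occupy $2n\lg\min(\sigma,\lg n)+2n+o(n)$ bits.

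The heart of the query is a single backward-search step that turns the $w$-interval $[b..e]$ into the $cw$-interval $[b'..e']$ for $w=P[i+1..]$ and $c=P[i]$. By Lemma~\ref{lem:ssp_change}, prepending $c$ alters $\ssp_w$ in at most one position, where an $\infty$ becomes the length of a newly created prefix-palindrome, and by Lemma~\ref{lem:sspg_match_order} this alteration is captured by a single identifier. When the created palindrome is non-trivial and fits inside the length-$|cw|$ window (the \emph{matching case}), I would set the required identifier $\kappa$ and extend precisely as in the classical FM-index, $b'=\select_{\palFstr}(\rank_{\palLstr}(b-1,\kappa)+1,\kappa)$ and $e'=\select_{\palFstr}(\rank_{\palLstr}(e,\kappa),\kappa)$; Lemma~\ref{lem:sspg_stable} guarantees that the rows of $[b..e]$ carrying $\kappa$ are sent by $\palLF$ to a contiguous interval in the same relative order, so these formulas are correct.

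The main obstacle is the complementary case, in which the pattern creates no prefix-palindrome inside the window. A row $i\in[b..e]$ must then be retained not only when $\palLstr[i]=\infty$, but also when $\palLstr[i]$ is a finite identifier whose palindrome \emph{overshoots} the length-$|cw|$ window and is therefore invisible inside it. Since identifiers are assigned in increasing order of representative length and all rows of $[b..e]$ share one group structure (Lemma~\ref{lem:same_groups}), overshooting is equivalent to exceeding a window-dependent threshold $\kappa_L$; the surviving rows are exactly those with $\palLstr[i]\in\{\infty\}\cup\{\kappa:\kappa>\kappa_L\}$. Their number is one $\rangecount_{\palLstr}(b,e,\cdot,\cdot)$ query, and by Lemma~\ref{lem:sspg_stable} their $\palLF$-images again form a single contiguous interval; this is where the $\rMq$ structure is needed, namely to pin down the endpoints $b'$ and $e'$ of that interval in $O(1)$ time by locating the extreme overshooting rows of $[b..e]$ through a range-maximum query on the palindrome-length array, instead of a binary search. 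Establishing the contiguity of this interval and the exact endpoint formulas is the only place where the argument leaves the textbook backward search, and I expect it to be the hardest part of the proof.

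It remains to feed the search: for each $i$ we must know which of the two cases applies, the identifier $\kappa$, and the threshold $\kappa_L$. All of this is determined by the $\spp$- and $\sspg$-information of $P$, which I would precompute in $O(m)$ time by the prefix-direction analogues of Lemmas~\ref{lem:compute_ssp} and~\ref{lem:compute_sspg}; the identifiers computed on $P$ agree with those stored for $T$ because pal-matching strings share their group structure (Lemma~\ref{lem:same_groups}). Because the alphabet of $\palLstr$ and $\palFstr$ has size $O(\min(\sigma,\lg n))$, each $\rank$, $\select$ and $\rangecount$ runs in $O(1+\tfrac{\lg\min(\sigma,\lg n)}{\lg\lg n})=O(1)$ time by Theorem~\ref{theo:wt}, and each $\rMq$ in $O(1)$ time, so the backward search over all $m$ steps takes $O(m)$ time. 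The answer to the counting query is $\occ=e-b+1$ for the final $P$-interval $[b..e]$.
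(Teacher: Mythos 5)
Your setup, space accounting, matching-case formulas, and the characterization of the surviving rows in the $\sppg(cw)=\infty$ case (those with $\palLstr[i]=\infty$ or $\palLstr[i]$ exceeding the threshold $g$ given by the number of prefix-pal-groups of $w$, counted by one $\rangecount$ query) all agree with the paper. But there is a genuine gap exactly where you flag it: you never actually produce the endpoints $b',e'$ in the $\infty$ case, and the auxiliary array you build the RMQ over is not the one that makes this work. The paper builds the RMQ of Theorem~\ref{theo:rmq} over $V[i]=\palLF(i)$ and sets $e'=\palLF(\rMq_V(b,e))$. The justification is a one-sided order argument: every row $i\in[b..e]$ with $\palLstr[i]\le g$ has a predecessor whose $\ssp$-encoding is lexicographically smaller than that of any surviving row's predecessor, so the non-surviving rows map strictly below $b'$ while the surviving rows map onto $[b'..e']$; hence the maximum of $\palLF$ over all of $[b..e]$ is attained by a surviving row and equals $e'$, and $b'=e'-\rangecount_{\palLstr}(b,e,g,\infty)+1$.

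Your RMQ over ``the length of the shortest non-trivial prefix-palindrome of each suffix'' does not substitute for this. It can at best single out a surviving row (one whose palindrome overshoots, or is absent), but among the surviving rows the one whose $\palLF$-image is $e'$ is determined by the full lexicographic comparison of the predecessors' $\ssp$-encodings beyond position $|cw|$, not by which row has the longest (or most absent) overshooting palindrome; Lemma~\ref{lem:sspg_stable} only preserves order within rows sharing the \emph{same} $\sppg$ value, and the surviving rows generally carry several distinct values. So the row returned by your RMQ need not map to $e'$, and you have no formula to extract $e'$ from it. Replacing your auxiliary array by $V[i]=\palLF(i)$ and adding the order argument above closes the gap; everything else in your proposal (including the $O(m)$ precomputation of $\sppg_{\rev{P}}$ and $\nsspg_{\rev{P}}$ and the $O(1)$ per-step cost) matches the paper's proof.
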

\begin{proof}
  We use the data structures of Theorem~\ref{theo:wt} for $\palLstr$ and $\palFstr$,
  and the RMQ data structure of Theorem~\ref{theo:rmq} for the integer array $V$ with $V[i] = \palLF(i)$.
  Since the number of distinct symbols in $\palLstr$ and $\palFstr$ are $O(\min(\sigma, \lg n))$ by Lemma~\ref{lem:num_groups},
  the data structures occupy $2n \lg \min(\sigma, \lg n) + 2n + o(n)$ bits of space in total
  and all queries ($\rank$, $\select$, $\rangecount$ and $\rMq$) can be supported in $O(1)$ time.

  The number of occurrences of $P$ can be answered by computing the width of $P$-interval.
  Thus we focus on a single step of backward search.
  In a general setting, for any string $w$ and a character $c$,
  we show how to compute $cw$-interval $[b'..e']$ in $O(1)$ time from $w$-interval $[b..e]$, $\sppg(cw)$ and the number $g$ of prefix-pal-groups of $w$.
  The procedure differs depending on $\sppg(cw) = \infty$ or not.
  \begin{enumerate}
    \item When $\sppg(cw) = k \neq \infty$.
      Using Lemma~\ref{lem:sspg_match_order} in a symmetric way, $[b'..e']$ is obtained by mapping the positions of $\sppg(cw)$ in $\palLstr[b..e]$ by the $\palLF$ function.
      More specifically, $b' = \select_{\palFstr}(\rank_{\palLstr}(b - 1, k) + 1, k)$ and 
      $e' = \select_{\palFstr}(\rank_{\palLstr}(e, k), k)$, which can be computed in $O(1)$ time.
    \item When $\sppg(cw) = \infty$. We note that $[b'..e']$ is the maximal interval such that 
      $T[\palSA[i]..]$ does not have non-trivial prefix-palindrome (i.e. $\sppg(T[\palSA[i]..]) = \infty$) or 
      $T[\palSA[i]..]$ has the non-trivial shortest prefix-palindrome of length longer than $|cw|$ (i.e. $\sppg(T[\palSA[i]..]) > g$).
      Thus, $e' - b' + 1$ is equivalent to the number of occurrences of values larger than $g$ in $\palLstr[b..e]$,
      which can be computed in $\rangecount_{\palLstr}(b, e, g, \infty)$ in $O(1)$ time.
      Moreover, it holds that $e' = \palLF(\rMq_{V}(b, e))$ because 
      $\ssp(T[\palSA[i]-1..])$ with $\sppg(T[\palSA[i]-1..]) = \palLstr[i] > g$ is always lexicographically larger than 
      $\ssp(T[\palSA[j]-1..])$ with $\sppg(T[\palSA[j]-1..]) = \palLstr[j] \le g$.
      Thus, we can compute $[b'..e']$ in $O(1)$ time.
  \end{enumerate}

  Backward search for $P$ requires $\sppg(P[i..])$ and the number $g$ of prefix-pal-groups of $P[i..]$ for all $1 \le i \le m$,
  which can be computed by $\sspg_{\rev{P}}$ and $\nsspg_{\rev{P}}$ in $O(m)$ time using Lemmas~\ref{lem:compute_sspg} and~\ref{lem:compute_nsspg}.

  Putting all together, we get the theorem.
\end{proof}

\begin{theorem}\label{theo:pal_locating}
  Let $T$ be a string of length $n$ over an alphabet of size $\sigma$ and $\Delta$ be an integer in $[1..n]$.
  There is a data structure of $2n \lg \min(\sigma, \lg n) + \frac{n}{\Delta} \lg n + 3n + o(n)$ bits of space
  to support the locating queries for the pal-matching problem in $O(m + \Delta \occ)$ time,
  where $m$ is the length of a given pattern $P$ and $\occ$ is the number of occurrences to report.
\end{theorem}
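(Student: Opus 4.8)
The plan is to augment the counting data structure of Theorem~\ref{theo:pal_counting} with a sampled version of $\palSA$, following the classical locating scheme of the FM-index. Recall that by definition $\palLF$ satisfies $\palSA[\palLF(i)] = \palSA[i] - 1$, so iterating $\palLF$ from a row $i$ walks the corresponding text position backward by one at each step, and this operation is already available in $O(1)$ time from the counting structure. Hence, if we can recover $\palSA[i]$ for every $i$ in the $P$-interval $[P_b..P_e]$, we can report all occurrences. The counting part already delivers this interval in $O(m)$ time, so the whole task reduces to computing $\palSA$-values fast using little extra space.

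First I would sample the suffixes whose starting position in $T$ is congruent to $1$ modulo $\Delta$, explicitly including position $1$. Concretely, I introduce a bitvector $B$ of length $n+1$ with $B[i] = 1$ iff $\palSA[i]$ is a sampled text position, equipped with $o(n)$-bit rank support, together with an array that stores the sampled $\palSA$-values in the order in which they appear along $B$, indexed by $\rank_{B}(i, 1)$. There are $O(n/\Delta)$ samples, each occupying $\lg n$ bits, so this array takes $\frac{n}{\Delta}\lg n$ bits and the bitvector takes $n + o(n)$ bits. Added to the counting structure, this yields the claimed $2n\lg\min(\sigma,\lg n) + \frac{n}{\Delta}\lg n + 3n + o(n)$ bits of space.

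To compute $\palSA[i]$ for an arbitrary row $i$, I walk backward: initialize a counter $t \gets 0$ and repeatedly test whether the current row is sampled; if it is, I read the stored value $v$ and return $v + t$, and otherwise I advance to $\palLF(\cdot)$ and increment $t$. Since the text position strictly decreases by one at each $\palLF$ step, and every block of $\Delta$ consecutive text positions contains a sample, this loop halts after at most $\Delta$ steps, so each lookup costs $O(\Delta)$. Locating then consists of computing the $P$-interval in $O(m)$ time via Theorem~\ref{theo:pal_counting} and invoking this lookup for each of the $\occ$ rows in $[P_b..P_e]$, for a total of $O(m + \Delta\occ)$ time.

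The one point that requires care, and which I expect to be the only real obstacle, is termination of the backward walk at the left boundary of $T$: the corner-case rule $\palLF(i)=1$ for $\palSA[i]=1$ would send the walk to the empty-suffix row and break the decrement-by-one invariant. I therefore must guarantee that text position $1$ is itself sampled, so that the walk always stops (reading a stored value) strictly before this degenerate step could be taken; indeed, row $1$ is reachable only through this corner case from the row with $\palSA[i]=1$. With this boundary handled, correctness follows directly from the identity $\palSA[\palLF(i)] = \palSA[i]-1$ established in the counting part, and the theorem follows.
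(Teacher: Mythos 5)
Your proposal is correct and follows essentially the same route as the paper: the classical FM-index sampling of text positions $\Delta k + 1$ via a marked bitvector $B$ with rank support plus a sparse array of sampled $\palSA$-values, recovering an unsampled value by iterating $\palLF$ at most $\Delta$ times, with the same space accounting $\frac{n}{\Delta}\lg n + n + o(n)$ on top of Theorem~\ref{theo:pal_counting}. Your explicit note that text position $1$ must be sampled to avoid the corner case $\palLF(i)=1$ is a detail the paper's sampling of positions $\Delta k + 1$ already covers implicitly (take $k=0$), so there is no discrepancy.
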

\begin{proof}
  We use the data structure and the algorithm of Theorem~\ref{theo:pal_counting} to compute 
  $P$-interval in $2n (1 + \lg \min(\sigma, \lg n)) + o(n)$ bits of space and $O(m)$ time.
  The occurrences of $P$ (in the sense of pal-matching) can be answered by the $\palSA$-values in $P$-interval.
  We employ exactly the same sampling technique used in the FM-index to retrieve $\SA$-values (e.g., see~\cite{Ferragina2000ODS}):
  We make a bit vector $B$ of length $n+1$ marking the positions $i$ in $\palSA$ such that $\palSA[i] = \Delta k + 1$ for some integer $k$,
  and the sparse suffix array $S$ holding only the marked $\palSA$-values in the order.
  $B$ is equipped with a data structure to support the rank queries and the additional space to Theorem~\ref{theo:pal_counting}
  is $\frac{n}{\Delta} \lg n + n + o(n)$ bits in total.

  If position $i$ is marked, $\palSA[i]$ is retrieved by $S[\rank_B(i, 1)]$ in $O(1)$ time.
  If position $i$ is not marked, we apply LF-mapping $k$ times from $i$ until we reach a marked position $j$
  and retrieve $\palSA[i]$ by $S[\rank_B(j, 1)] + k$.
  Since text positions are marked every $\Delta$ positions, the number $k$ of LF-mapping steps is at most $\Delta$,
  and hence, $\palSA[i]$ can be retrieved in $O(\Delta)$ time.
  Therefore we can report each occurrence of $P$ in $O(\Delta)$ time, and the theorem follows.
\end{proof}

\section{Conclusions and future work}
In this paper, we developed new encoding schemes for pal-matching and proposed the PalFM-index, a space-efficient index for pal-matching based on the FM-index.
Future work includes to present an efficient construction algorithm of the PalFM-index, and
to reduce the space requirement (e.g.\ by incorporating with the idea of~\cite{2017GangulyST_PbwtAchievSuccinDataStruc_SODA}).
Another interesting research direction would be to develop a general framework to design FM-index type indexes in generalized pattern matching.
We believe that switching encoding from $\lpal$ to $\ssp$ to design the PalFM-indexes gives a good hint to pursue this direction,
and conjecture that any generalized pattern matching under a substring consistent equivalent relation~\cite{2016MatsuokaAIBT_GenerPatterMatchAndPeriod_TCS}
admits such shortest positional encodings to design FM-index type indexes.

\section*{Acknowledgements}
This work was supported by JSPS KAKENHI (Grant Number 19K20213).

\bibliography{refs}

\end{document}